\documentclass[preprint]{article}

\usepackage{neurips_2026}

\usepackage[utf8]{inputenc} 
\usepackage[T1]{fontenc}    
\usepackage{hyperref}       
\usepackage{url}            
\usepackage{booktabs}       
\usepackage{amsfonts}       
\usepackage{nicefrac}       
\usepackage{microtype}      
\usepackage{xcolor}         
\usepackage{enumerate}
\usepackage{amsmath, amssymb, amsthm}
\usepackage{multirow}
\usepackage{booktabs}
\usepackage{graphicx}
\usepackage{comment}
\usepackage{enumitem}
\usepackage[skip=0pt]{caption}
\usepackage{wrapfig}
\usepackage{algorithm}
\usepackage{algpseudocode}

\usepackage{makecell}
\usepackage{xcolor}
\definecolor{gainpos}{RGB}{0,60,255} 
\definecolor{gainneg}{RGB}{180,30,30}  
\makeatletter
\newcommand{\gain}[1]{\@gaintest#1\@nil}
\def\@gaintest#1#2\@nil{%
  \ifx-#1%
    \textcolor{gainneg}{\tiny #1#2}%
  \else
    \textcolor{gainpos}{\tiny #1#2}%
  \fi}
\makeatother

\newcommand{\cg}[2]{\makecell{#1\makebox[1.5em][c]{\gain{#2}}}}

\title{Non-negative Elastic Net Decoding for Information Retrieval}

\author{
Koki Okajima\thanks{Corresponding author: \texttt{koki.okajima@ntt.com}} \ 
 \quad \  Yasutoshi Ida 
 \quad \ Tsukasa Yoshida
 \quad \ Yasuaki Nakamura
  \\
  \vspace{-0.8em}
  \\
  NTT, Inc.
  \vspace{-1.0em}
}

\usepackage{thmtools}
\usepackage{thm-restate}

\DeclareMathOperator*{\argmax}{arg\,max}
\DeclareMathOperator*{\argmin}{arg\,min}

\begin{document}

\maketitle

\begin{abstract}
Dense retrieval has become the dominant paradigm in information retrieval, in which each document is scored against a query by the inner product of their vector embeddings, and the top-$k$ documents by score are retrieved for this query. 
However, since each document's score depends solely on the embedding of the query and itself, the retrieval process is oblivious to the content of the entire corpus. Therefore, 
dense retrieval cannot avoid selecting semantically similar documents from the corpus, which may result in a non-diverse, redundant set of retrieved documents. 
To this end, we approach retrieval as a joint decoding problem, in which documents are selected as a set with regard to the context of the rest of the corpus. 
To achieve this, we propose Non-Negative elastic Net (NNN) decoding, which selects documents whose embeddings jointly reconstruct the query embedding as a sparse non-negative linear combination.

Our main theoretical result establishes a strict separation between dense retrieval and NNN decoding. 
For any corpus, every query correctly handled by dense retrieval is also handled by NNN decoding, while on corpora containing correlated documents, NNN decoding additionally handles queries that dense retrieval cannot. 
Experimental results indicate that applying NNN decoding to frozen embeddings trained for inner-product scoring yields consistent improvements across several benchmarks. 
Moreover, we introduce an end-to-end training procedure which optimizes the embeddings for NNN decoding, producing significant performance gains surpassing in all metrics and benchmarks compared to dense retrieval. 
Our work establishes a new paradigm for leveraging dense embeddings in information retrieval, beyond the standard practice of inner-product scoring. 
\end{abstract}

\section{Introduction}

Information retrieval is a ubiquitous procedure in modern information systems, ranging from search engines to modern AI pipelines such as in retrieval-augmented generation \citep{lewis2020rag, guu2020realm, borgeaud2022retro} and agentic systems that select from large tool catalogues \citep{schick2023toolformer, qin2023toolllm, patil2023gorilla}.  
Typically, an information retrieval system aims to solve a needle-in-a-haystack problem, in which it selects a few documents from a large corpus which it redeems as relevant to a query, and discards the rest. 
Therefore, it is important to develop efficient yet precise retrieval systems to extract information content relevant to the query. 

Dense retrieval is the dominant method for achieving low latency in such tasks, in which a bi-encoder maps both the query and each document to a fixed-dimensional vector and scores relevance by inner product or cosine similarity \citep{Huang2013SemanticModels, karpukhin2020dpr}. 
These neural language embedding models, capturing the semantic relationship between corpus and queries, enable matching beyond lexical overlap \citep{wang2022e5, ni2022sentencet5}, which is a key advantage over term-statistics based methods such as TF-IDF and BM25\citep{robertson1995bm25}. These properties have made dense bi-encoders the backbone of production search and retrieval-augmented pipelines, with research mainly conducted on enhancing its training protocol and architecture \citep{wang2022e5, ni2022sentencet5, xiong2021ance, izacard2021contriever}.

\begin{figure}
    \centering
    \includegraphics[width=1.0\linewidth]{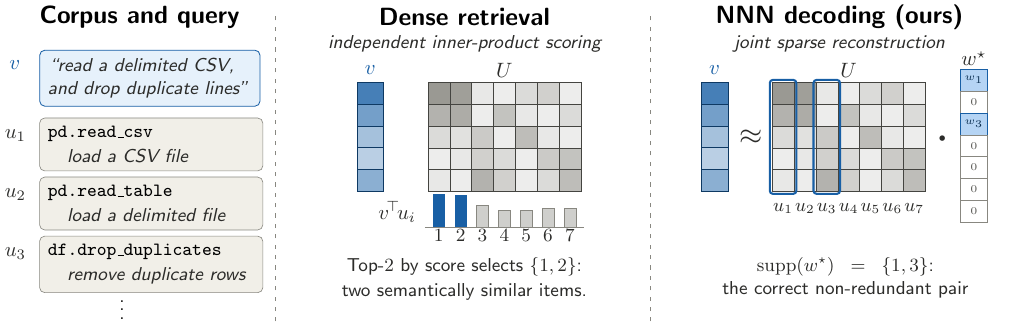}
    \caption{ 
    A conceptual comparison of dense retrieval and NNN decoding on a tool retrieval task. The query requires two tools $\{1, 3\}$, but $u_1$ and $u_2$ are semantically similar to one another, hence having similar embeddings. Dense retrieval (middle) scores each tool independently, and may return the redundant top-2 pair $\{1, 2\}$. NNN decoding (right) jointly reconstructs $v$ as a sparse non-negative combination of the columns of $U$ and recovers the correct, non-redundant pair $\{1, 3\}$.
    }
    \label{fig:concept}
    \vspace*{-12pt}
\end{figure}

Despite the progress in improving bi-encoder models, the inner-product scoring mechanism has remained fixed in dense retrieval. 
Since each document's score depends only on itself and the query, the dense retriever never exploits the fact that documents in the corpus are correlated with one another. 
In other words, dense retrieval uses only vector similarity with the query embedding, higher-ranked documents may have strongly correlated embeddings, typically indicating semantic similarity. 
Therefore, using this ranking procedure may result in a redundant set of documents with high informational overlap, regardless of the overall content of the corpus. 
This is especially problematic for tasks where the objective is to retrieve a complementary set of documents or items to complete a query rather than to accumulate similar information sources, which is the case for tool retrieval and multi-hop retrieval \citep{Carbonell1998MMR, Qu2024CompletenessOriented, Rezazei2025VendiRAG}. 

Overcoming this limitation requires a scoring rule that judges documents collectively, allowing the selection of one document to discount others that overlap with it. Such a collective scoring rule can be expressed by modeling the query embedding as a linear combination of corpus embeddings, which induces a mechanism that reduces the selection of overlapping documents. 
We realize this through our proposed method \emph{\textbf{\underline{N}}on-\textbf{\underline{N}}egative elastic \textbf{\underline{N}}et (NNN) decoding}, a retrieval method which selects documents whose embeddings jointly reconstruct the query embedding under a sparsity-inducing penalty (Figure~\ref{fig:concept}), in which its intensity is governed by hyperparameters. 

The strict gap between dense retrieval and NNN decoding can be realized theoretically. We prove that there exist corpora on which NNN decoding recovers the relevant subset while dense retrieval fails, and such corpora are themselves an instance of the near-duplicate scenario described above (Proposition~\ref{prop:strict}).  More broadly, NNN decoding generalizes inner-product scoring at the query level: on every corpus and every target relevant set, any query correctly handled by inner-product scoring is also handled by NNN decoding with suitable hyperparameters (Theorem~\ref{thm:inclusion}). This establishes NNN decoding as a viable drop-in replacement of inner-product scoring in a theoretically grounded way. 

This difference is also realized empirically on tool retrieval and multi-hop retrieval benchmarks. Applying NNN decoding to the embeddings of an inner-product-trained bi-encoder already improves the probability of retrieving all relevant documents over dense retrieval. This can further be improved by fine-tuning the embeddings tailored to NNN decoding's reconstruction objective, 
resulting in up to a 36\% performance increase in the completeness metric. 
This gap widens as the number of relevant documents per query grows, which matches the regime in which the theory predicts NNN decoding to be most effective; namely, the queries whose relevant set is more likely to contain documents correlated with irrelevant ones that inner-product scoring may have difficulty separating.

\subsection{Contributions} 
\label{sec:contributions}
In summary, our main contributions are as follows: 
\begin{enumerate}[leftmargin=0.5cm] 
    \item We propose NNN decoding, a new information retrieval method that replaces independent inner-product scoring with non-negative elastic net regression (Section~\ref{sec:theory}).
    \item We prove that any query correctly handled by inner-product scoring is correctly handled by elastic net for some choice of hyperparameters (Theorem~\ref{thm:inclusion}), and there exist queries on which elastic net succeeds while inner-product scoring fails (Proposition~\ref{prop:strict}). This proves that NNN decoding enlarges the set of queries on which the correct relevant subset is recovered beyond dense retrieval (Section~\ref{sec:theory}).
    \item We develop an end-to-end training procedure based on Unrolled FISTA to fine-tune the bi-encoder for NNN decoding (Section~\ref{sec:method-unrolled}), and show that it improves retrieval performance on tool retrieval and multi-hop retrieval benchmarks (Section~\ref{sec:experiments}).
    \item We show that NNN decoding applied at test time to the frozen embeddings of an inner-product-trained encoder yields consistent precision gains with only two tunable hyperparameters, demonstrating that the theoretical separation is partially accessible without any retraining (Section~\ref{sec:experiments}), and is a highly effective direct replacement for inner-product scoring. 
\end{enumerate}
\subsection{Related Works}

\paragraph{Dense retrieval.}
Dense bi-encoder retrieval \citep{Huang2013SemanticModels, karpukhin2020dpr} has been improved along several axes, including hard-negative mining \citep{xiong2021ance}, unsupervised contrastive pretraining \citep{izacard2021contriever}, instruction-tuned embeddings \citep{wang2022e5, su2022instructor}, and scaling to larger language model backbones \citep{ni2022sentencet5, muennighoff2024gritlm, lee2025gemini}. 
These methods share a common scoring mechanism, in which each document receives an independent inner-product score against the query and the top-$k$ documents by score are returned. Recently, there have also been work investigating the information-theoretical threshold on the embedding dimension of dense bi-encoders with inner-product scoring to recover any $k$-subset of the corpus \citep{bangachev2025global, weller2026limit, Wang2026R2k, bangachev2026dbarrier, Okajima2026Quant}. 

Our work is orthogonal to this line of research. Rather than competing with advances in encoder architecture or training data, we replace the scoring function with a decoder that can be applied on top of any bi-encoder, including those trained entirely for inner-product relevance. 
Moreover, our theoretical results consider not the capacity of the embedding space under a fixed scoring rule, but the expressivity of the scoring rule itself under fixed embeddings, isolating the gains attributable to the scoring rule rather than to the embedding dimension.

\paragraph{Retrieval architectures beyond single-vector dense scoring.}

A range of architectures improve on single-vector inner-product retrieval while retaining independent scoring. 
Some examples include cross-encoders \citep{nogueira2019passage} which jointly encode the query and each candidate document onto a score number, or multi-vector models such as ColBERT \citep{khattab2020colbert}. In each case the scoring function factorizes across documents, and the top-$k$ is read off a list of independently computed scores.
On the contrary, listwise LLM rerankers \citep{Sun2023RankGPT, Adeyemi2024ListwiseReranker} and generative retrieval \citep{Lee2023GLEN, Li2024GenRetrieval} break this independence by jointly scoring documents fed into the model. However, both incur language-model inference at query time, which may be computationally infeasible for latency-sensitive usages. NNN decoding also breaks this separability but without this computational cost, while employing precomputed embeddings which are capable of indexing a large corpus in a memory-efficient manner. 

\paragraph{Sparse coding and compressed sensing.}
The mathematics of reconstructing a target vector as a combination of atoms drawn from a dictionary, has a long history in signal processing and statistics. The Lasso \citep{Tibshirani1996lasso} and elastic net \citep{Zou2005elasticnet} established regularization terms based on $\ell_1$ penalty as a practical means of inducing sparsity in regression problems. Compressed sensing theory provides recovery guarantees that identify when the support of a sparse signal can be exactly recovered from linear measurements, typically under conditions such as the restricted isometry property or incoherence of the dictionary \citep{candes2006stable, donoho2006cs, Kabashima2009Lpmin, Wainwright2009Lasso, Miolane2021distribution, Okajima2023UltraSparse}. 
Algorithmically, this class of optimization problems can be solved by coordinate descent \citep{Friedman2010Regularization}, ADMM \citep{Boyd2011ADMM}, or proximal-gradient methods \citep{Daubechies2004ISTA}. Our NNN decoding algorithm build on FISTA \citep{beck2009fista}, which parallelizes efficiently on modern GPUs and attains the optimal convergence rate among first-order methods for minimizing smooth convex functions \citep{nemirovsky1983problem, nesterov2004introductory}. The iterative structure of FISTA also admits unrolling into a finite-depth differentiable computation \citep{Gregor2010LISTA}, which we exploit in Section~\ref{sec:method-unrolled} to train the encoder end-to-end through the solver.

\subsection{Preliminaries}
Here we summarize the notations used in this paper.
Let $[n] = \{1, \cdots, n\}$. 
Let $\mathbb{S}^{d-1}$ denote the $d$-dimensional unit sphere. For a set $S\subseteq [n]$ and matrix $X \in \mathbb{R}^{m \times n}$, $X_S \in \mathbb{R}^{m \times |S|}$ denotes the submatrix of $X$ with indices in $S$. Whenever it is apparent that $S$ is a subset of $[n]$, we use the notation $S^c$ to denote the complementary set $[n] \backslash S$. 
Also, for some vector $x \in \mathbb{R}^n$, define its support as $ \mathrm{supp}(x) = \{ \ i \ | \ i \in [n], \ x_i \neq 0 \}. $ Finally, $[ \ \cdot \ ]_+ = \max(\ \cdot \ , 0)$ denotes the ReLU function. 
\section{Theoretical Analysis}
\label{sec:theory}

This section establishes the central theoretical guarantee of NNN decoding: on any corpus, NNN decoding can handle a set of queries that is larger than or equal to the set solvable by dense retrieval, with the inequality holding for corpora with correlated documents. We first define the two retrievers and the sets of queries each one handles correctly, then state the main theorem along with a companion proposition establishing the inequality. 

\subsection{Retrieval methods and their success sets}
\label{sec:decoders}
Throughout this section we assume that the embeddings are given in dimension $d$. 
Let $\{u_i \in \mathbb{S}^{d-1}\}_{i \in [N]}$ be the embeddings of the $N$ documents, which constitute the corpus embedding matrix $U = [u_1, \ldots, u_N] \in \mathbb{R}^{d \times N}$ with unit columns. 
We assume that for a query, only a subset of documents $S\subseteq [N]$ with $|S| = k$ are relevant inside the corpus, and the retrieval task is to recover $S$ from only $U$ and the embedding of the query $v\in \mathbb{R}^d$. 
A retriever is a map that takes $(U, v)$ and returns a subset of $[N]$; we call the \emph{success set} of a retriever given $U, S$ as the set of $v$ which \emph{succeeds} in recovering $S$. 

\paragraph{Dense retrieval.} The standard bi-encoder retrieval pipeline employs the \emph{dense retriever}, which scores each document by its inner product with the query and returns the top $k$ by score. Formally, the dense retriever succeeds on $(U, v, S)$ when $v$ is a query which induce $S$ as the top-$k$ with positive scores. The success set is thus defined by 
\[
  \Phi_{\mathrm{DR}}(U, S) \;=\; \Bigl\{v \in \mathbb{S}^{d-1} : \max_{j \in S^c} u_j^\top v < \min_{i \in S} u_i^\top v \quad \text{and} \quad  \min_{i \in S} u_i^\top v > 0 \Bigr\}.
\]

This condition is \emph{local}: each document is scored independently using only its similarity to the query, without considering other documents
Consequently, documents with similar embeddings obtain nearly identical scores, even if one already captures the relevant information and the other is redundant, lacking a mechanism to suppress unnecessary duplicates. This motivates a decoder that scores documents jointly.

\paragraph{Non-negative elastic net decoder.} The NNN decoder is a retriever which returns the support of the non-negative elastic net minimizer,
\begin{equation}\label{eq:enet-theory}
   w^\star(v, \lambda_1, \lambda_2) \;=\; \argmin_{w \ge 0}\; \Bigg( \frac{1}{2}\|Uw - v\|_2^2 \;+\; \lambda_1 \| w \|_1 \;+\; \frac{\lambda_2}{2}\|w\|_2^2 \Bigg).
\end{equation}
where $\lambda_1, \lambda_2 \ge 0$ are hyperparameters. The success set of this retriever is the set of queries whose elastic net support can be made equal to $S$ by some choice of hyperparameters,
\[
  \Phi_{\mathrm{NNN}}(U, S) \;=\; \Bigl\{v \in \mathbb{S}^{d-1} : \exists \, \lambda_1, \lambda_2 \ge 0 \quad  \text{s.t} \quad \mathrm{supp}(w^\star(v, \lambda_1, \lambda_2)) = S \Bigr\}.
\]
Note that a document's inclusion in the support is determined in the context of the entire corpus, as the coefficient assigned to each $u_i$ depends on every other $u_j$ through the reconstruction term $\|Uw - v\|_2^2$. This is a crucial difference from the dense encoder, where scores are assigned to each document in isolation. 

\paragraph{What the success sets describe.} 

Each set describes which queries lead to the correct retrieval of $S$ under different scoring methods. At fixed $U$, a query $v \in \Phi_{\mathrm{NNN}}(U, S) \setminus \Phi_{\mathrm{DR}}(U, S)$ is one for which the same corpus embeddings are sufficient to identify $S$ under NNN decoding but not under dense retrieval. Our main result characterizes this difference set $ \Phi_{\mathrm{NNN}}(U, S) \setminus \Phi_{\mathrm{DR}}(U, S)$.

\subsection{Main result}
\label{sec:main-result}

We are ready to state the theoretical result of the paper, whose full proof is postponed to the Appendix. 

\begin{restatable}{theorem}{theoremone}
\label{thm:inclusion}
$\Phi_{\mathrm{DR}}(U,S)\subseteq\Phi_{\mathrm{NNN}}(U,S)$; For any corpus $U$ and target subset $S$, every query that dense retrieval handles correctly is also handled correctly by NNN decoding.
\end{restatable}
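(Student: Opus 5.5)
The plan is to pick hyperparameters explicitly for a fixed $v \in \Phi_{\mathrm{DR}}(U,S)$. I will take $\lambda_2$ large, so that the elastic net behaves like a soft-thresholded inner-product score. I will take $\lambda_1$ strictly between the largest irrelevant score and the smallest relevant score. Then I will certify, via the KKT conditions of the strongly convex problem~\eqref{eq:enet-theory}, that the unique minimizer has support exactly $S$.

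\textbf{Step 1: choice of $\lambda_1$.} Let $a = \max\bigl(\max_{j\in S^c} u_j^\top v,\,0\bigr)$, with $a=0$ if $S^c=\emptyset$, and let $b=\min_{i\in S} u_i^\top v$. Membership in $\Phi_{\mathrm{DR}}(U,S)$ gives $b>a\ge 0$. Set $\lambda_1=(a+b)/2>0$ and $\delta=(b-a)/2>0$. Then
\begin{equation*}
u_i^\top v-\lambda_1\ge\delta \quad (i\in S), \qquad u_j^\top v-\lambda_1\le-\delta \quad (j\in S^c).
\end{equation*}

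\textbf{Step 2: KKT characterization.} For $\lambda_2>0$ the objective is strongly convex, so the minimizer $w^\star$ is unique. A vector $w\ge 0$ is this minimizer iff, with residual $r=v-Uw$:
\begin{itemize}
\item $u_i^\top r-\lambda_1-\lambda_2 w_i=0$ whenever $w_i>0$;
\item $u_j^\top r-\lambda_1\le 0$ whenever $w_j=0$.
\end{itemize}
I construct the candidate
\begin{equation*}
w_S=(U_S^\top U_S+\lambda_2 I)^{-1}(U_S^\top v-\lambda_1\mathbf 1), \qquad w_{S^c}=0,
\end{equation*}
which satisfies the equality conditions on $S$ by construction. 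It remains to show that $w_S>0$ entrywise and that the inequality holds strictly on $S^c$.

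\textbf{Step 3: perturbation bound, the main obstacle.} Write
\begin{equation*}
(U_S^\top U_S+\lambda_2 I)^{-1}=\lambda_2^{-1}\bigl(I+\lambda_2^{-1}G\bigr)^{-1}, \qquad G=U_S^\top U_S .
\end{equation*}
A Neumann-series bound gives
\begin{equation*}
\bigl\|\lambda_2 w_S-c\bigr\|_\infty\le C\|G\|/\lambda_2, \qquad c=U_S^\top v-\lambda_1\mathbf 1,
\end{equation*}
where $C$ depends only on $k$ and $\|c\|$. Since $c_i\ge\delta$ for all $i\in S$, taking $\lambda_2$ large makes every entry of $w_S$ positive.

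For $j\in S^c$,
\begin{equation*}
u_j^\top r-\lambda_1=(u_j^\top v-\lambda_1)-u_j^\top U_S w_S\le-\delta+\sqrt{k}\,\|w_S\|_2 .
\end{equation*}
Here I used unit columns, so $|u_j^\top u_i|\le1$. Because $\|w_S\|_2=O(1/\lambda_2)$, the right-hand side is negative for $\lambda_2$ large enough.

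The only delicate point is making these $O(1/\lambda_2)$ estimates uniform. This is routine, since $U$, $v$ and $S$ are fixed and the margin $\delta$ is strictly positive.

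\textbf{Step 4: conclusion.} By uniqueness, $w^\star(v,\lambda_1,\lambda_2)$ equals the candidate, whose support is exactly $S$. Hence $v\in\Phi_{\mathrm{NNN}}(U,S)$. Both the positivity condition $\min_S u_i^\top v>0$ in $\Phi_{\mathrm{DR}}$ and the strict score separation are used, to guarantee $\lambda_1\ge0$ and $\delta>0$ respectively.
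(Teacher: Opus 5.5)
Your proposal is correct and is essentially the paper's proof. It uses the same choice $\lambda_1=(a+b)/2$ with margin $\delta=(b-a)/2$ and the same primal--dual witness candidate supported on $S$. It obtains $w_S>0$ from the same Neumann-series comparison with the decoupled (soft-thresholded inner-product) solution, bounds $|u_j^\top U_S w_S|<\delta$ for $j\in S^c$, and concludes by uniqueness from strong convexity. The only difference is that the paper makes the large-$\lambda_2$ step explicit by taking $\lambda_2=k/\delta+k-1$ and expanding around $1+\lambda_2$ after splitting off the unit diagonal of $U_S^\top U_S$; you leave it asymptotic, which is legitimate since $\lambda_1$, $c$ and $\delta$ do not depend on $\lambda_2$.
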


\begin{restatable}{proposition}{propositiontwo}
\label{prop:strict}
    There exist $U$ and $S$ for which
$\Phi_{\mathrm{NNN}}(U,S)\not\subseteq\Phi_{\mathrm{DR}}(U,S)$.
\end{restatable}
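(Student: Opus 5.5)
The plan is to exhibit an explicit small corpus of the near-duplicate type from Figure~\ref{fig:concept}, together with a query $v$ lying in $\Phi_{\mathrm{NNN}}(U,S)$ but not in $\Phi_{\mathrm{DR}}(U,S)$. I take $d=3$, $N=3$, $S=\{1,3\}$ and
\[
u_1 = e_1,\qquad u_3 = e_2,\qquad u_2 = \cos\theta\, e_1 + \sin\theta\, e_3 ,
\]
so that $u_2$ is a near-duplicate of the relevant document $u_1$ whenever $\theta$ is small. For the query I take $v = (a e_1 + b e_2)/\sqrt{a^2+b^2}$ with $a > b > 0$, say $a=2$, $b=1$.

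\emph{Dense retrieval fails.} The scores are proportional to $u_1^\top v \propto a$, $u_3^\top v \propto b$ and $u_2^\top v \propto a\cos\theta$. I choose $\theta$ with $\cos\theta > b/a$, for instance $\cos\theta = 0.9$ when $a=2$, $b=1$. Then the irrelevant document scores strictly higher than a relevant one:
\[
u_2^\top v > u_3^\top v = \min_{i\in S} u_i^\top v .
\]
This violates the defining condition of $\Phi_{\mathrm{DR}}(U,S)$, so $v\notin\Phi_{\mathrm{DR}}(U,S)$. Indeed, the top-$2$ set returned by dense retrieval is $\{1,2\}$.

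\emph{NNN decoding succeeds.} I take $\lambda_1=\lambda_2=0$, which the definition permits. Problem~\eqref{eq:enet-theory} then becomes non-negative least squares. Since $\sin\theta\neq 0$, the columns $u_1,u_2,u_3$ are linearly independent and span $\mathbb{R}^3$. Hence $Uw=v$ has the unique solution
\[
w=(a,0,b)/\sqrt{a^2+b^2}.
\]
This vector is non-negative, and it attains objective value $0$. Any other minimizer would also have to attain value $0$, and so would solve $Uw=v$; by uniqueness of that solution, $w$ is the unique minimizer. Its support is exactly $\{1,3\}=S$, so $v\in\Phi_{\mathrm{NNN}}(U,S)$.

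The only point needing care is the uniqueness of the minimizer. With $\lambda_2=0$ the objective need not be strictly convex, so uniqueness must be argued rather than assumed, and linear independence of the columns of $U$ supplies it.

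If one prefers strictly positive hyperparameters, the plan is to pick a small $\lambda_2>0$, which makes the objective strictly convex, and a small $\lambda_1>0$. I would then invoke continuity of the strictly convex minimizer in $(\lambda_1,\lambda_2)$ to keep the coordinates $1$ and $3$ positive. The KKT condition for coordinate $2$ must also hold: $u_2^\top(v-Uw^\star)-\lambda_2 w_2^\star\le\lambda_1$ with $w_2^\star=0$. The residual $v-Uw^\star$ has size $O(\lambda_1+\lambda_2)$, so one has to check that this condition holds with the right constant. Since the statement only requires existence, however, the $\lambda_1=\lambda_2=0$ choice already completes the proof.
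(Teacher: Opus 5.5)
Your construction has the same structure as the paper's: an invertible $3\times 3$ $U$ in which an irrelevant atom correlated with a relevant one outscores the weakest relevant atom. The paper uses $u_1=e_1$, $u_2=(e_1+e_2)/\sqrt{2}$, $u_3=e_3$, $S=\{2,3\}$. Your dense-retrieval computation and your uniqueness argument are both correct.

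The weak point is the choice $\lambda_1=\lambda_2=0$, which carries the whole NNN half. The proposition is proved in the appendix, and there $\Phi_{\mathrm{NNN}}(U,S)$ is defined with $\lambda_1,\lambda_2\ge 0$ \emph{not both zero} and a unique minimizer. Your choice is therefore admissible only under the looser main-text definition.

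Your fallback does not close this as written. At $\lambda_1=\lambda_2=0$ the inactive condition for coordinate $2$ is tight, $u_2^\top(v-Uw^\star)=0=\lambda_1$. Continuity of the minimizer therefore tells you nothing about whether $w_2^\star$ stays exactly zero. The ``right constant'' you leave unchecked genuinely matters. Since $U_S^\top U_S=I$, the candidate is $w_S^\star=(U_S^\top v-\lambda_1\mathbf{1})/(1+\lambda_2)$, and
\[
u_2^\top\bigl(v-U_Sw_S^\star\bigr)=\frac{\cos\theta\,\bigl(\lambda_2\,u_1^\top v+\lambda_1\bigr)}{1+\lambda_2}.
\]
This is $\le\lambda_1$ exactly when $\lambda_2\bigl(\cos\theta\,u_1^\top v-\lambda_1\bigr)\le\lambda_1(1-\cos\theta)$. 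So making both hyperparameters small is not enough: $\lambda_2$ must be small relative to $\lambda_1$, or the support is no longer $S$.

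The repair is one line, and it is exactly the paper's move. Keep $\lambda_2=0$ and take $0<\lambda_1<u_3^\top v=b/\sqrt{a^2+b^2}$. Then:
\begin{itemize}
\item $w_S^\star=U_S^\top v-\lambda_1\mathbf{1}>0$;
\item the residual is $v-U_Sw_S^\star=\lambda_1(e_1+e_2)$;
\item $u_2^\top\lambda_1(e_1+e_2)=\lambda_1\cos\theta<\lambda_1$, so the inactive KKT condition now holds strictly.
\end{itemize}
KKT conditions are sufficient for this convex problem. Uniqueness follows from your observation that the columns of $U$ are linearly independent, which makes the objective strictly convex even with $\lambda_2=0$. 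The paper's own proof also uses $\lambda_2=0$ and leaves this uniqueness point implicit, so your argument usefully supplies it. With that substitution your proof is complete and is essentially the paper's.
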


\begin{proof}[Proof outline of Theorem~\ref{thm:inclusion}]
Take $v \in \Phi_{\rm DR}(U, S).$ The KKT conditions for \eqref{eq:enet-theory} are  
\begin{align}
  w_i^\star > 0 \;&\Longrightarrow\; u_i^\top(v - U w^\star) = \lambda_1 + \lambda_2 w_i^\star, 
    \tag{A}\label{eq:kkt-A}\\
  w_i^\star = 0 \;&\Longrightarrow\; u_i^\top(v - U w^\star) \le \lambda_1.
    \tag{B}\label{eq:kkt-B}
\end{align}
Following the primal-dual witness construction \citep{Wainwright2009Lasso}, we construct a candidate solution $w^\star$ only supported on $S$, and prove that it satisfies \eqref{eq:kkt-A} and \eqref{eq:kkt-B}. These are rewritten as 
\begin{align}
    &w_S^\star =  (U_S^\top U_S + \lambda_2 I)^{-1}(U_S^\top v - \lambda_1 \mathbf{1}_k) \ , \label{eq:kkt-active_} \\
    \forall i \in S^c, \qquad & u_i^\top (v - U_S w_S^\star) \leq \lambda_1 \label{eq:kkt-inactive_}.
\end{align}
Now, define $\alpha = \min_{i \in S} u_i^\top v$, $\beta = \max_{j \in S^c} u_j^\top v$, 
and $\delta = (\alpha - \max(0,\beta))/2$. One can show that the choice 
$\lambda_1 = (\alpha + \max(0,\beta))/2$ and $\lambda_2 = k/\delta + k - 1$ 
suffices to certify \eqref{eq:kkt-active_} and \eqref{eq:kkt-inactive_}. In addition, strong convexity from $\lambda_2 > 0$ verifies $w^\star$ as the unique minimizer of \eqref{eq:enet-theory}, which is supported on $S$ by definition, thereby completing the proof. 
\end{proof}
\begin{proof}[Proof outline of Proposition~\ref{prop:strict}]
This can be verified by construction; set $d = N = 3$ and 
\[
  u_1=e_1,\quad u_2=\tfrac{1}{\sqrt{2}}(e_1+e_2),\quad u_3=e_3,\quad
  v=\tfrac{2}{3}e_1+\tfrac{2}{3}e_2+\tfrac{1}{3}e_3,\quad S=\{2,3\}.
\]
For $\lambda_1 \in (0, 1/3)$ and (say) $\lambda_2 = 0$ we have that $w_2^\star = 2\sqrt{2}/3 - \lambda_1$ and $w_3^\star = 1/3 - \lambda_1$ with both being positive. Hence \eqref{eq:kkt-active_} holds. 
Now, condition \eqref{eq:kkt-inactive_} reduces to 
\begin{equation}\label{eq:demo}
      u_1^\top(v - U_S w_S^\star) 
  \;=\; \tfrac{2}{3} - \tfrac{1}{\sqrt{2}}\bigl(\tfrac{2\sqrt{2}}{3} - \lambda_1\bigr) 
  \;=\; \tfrac{\lambda_1}{\sqrt{2}} \;<\; \lambda_1,
\end{equation}
So $v \in \Phi_{\rm NNN}(U, S)$, yet $u_2^\top v > u_1^\top v > u_3^\top v$. 
 \end{proof}

Together, Theorem~\ref{thm:inclusion} and Proposition~\ref{prop:strict} establish that at any fixed embedding dimension $d$ and for any corpus $U$,  the NNN decoder correctly recovers $S$ on at least as many queries as dense retrieval, and on strictly more queries for some $U$.

\subsection{Discussion of the result}
\label{sec:theory-discussion}

Theorem~\ref{thm:inclusion} and Proposition~\ref{prop:strict} together identify a structural limitation of inner-product scoring and a concrete mechanism by which joint decoding overcomes it. Two consequences of the result deserve emphasis, each connecting the theory to the methods and experiments that follow. 
\paragraph{A mechanism for the strict gap.} 

The KKT conditions in the proof sketch make Proposition~\ref{prop:strict} concrete. 
As apparent from \eqref{eq:demo}, \eqref{eq:kkt-inactive_} holds at $j = 1$ even though $u_1^\top v$ exceeds $u_3^\top v$ . 
This is because the relevant element $U_S w_S^\star$ absorbs the component of $v$ along $u_1$, shrinking the residual 
that $u_1$ sees below the $\ell_1$ threshold $\lambda_1$.

In a more general note, condition \eqref{eq:kkt-inactive_} screens an irrelevant item $u_j$ by the inner product of $u_j$ and the \emph{residual} $v - U_S w_S^\star$. This is in contrast to inner-product scoring, where the screening is done by the inner product of $u_j$ and $v$ alone. 
The term $u_j^\top U_S w_S^\star$ in \eqref{eq:kkt-inactive_} subtracts some portion of $u_j$ lying in 
$\mathrm{span}(U_S)$ from $u_j ^\top v$, so correlation between $u_j$ and $S$ aids suppression under NNN decoding.
The performance gap is therefore expected to be largest on corpora where 
relevant and irrelevant atoms are correlated, which is typical with corpora including several semantically related documents.

\paragraph{Towards a practical algorithm.} Theorem~\ref{thm:inclusion} shows that at any fixed $U$, NNN decoding has strictly more per-query expressive power than inner-product scoring. This has two
practical consequences. \textbf{First, NNN decoding is a principled drop-in replacement for inner-product scoring on the frozen embeddings of any pretrained bi-encoder}; while this requires a specific pair of $(\lambda_1,\lambda_2)$ for each query in theory, we discuss a practical way to avoid this in Subsection~\ref{sec:method-inference}.

\textbf{
Second, since the success set $\Phi_{\rm NNN}(U, S)$ depends on $U$, the bi-encoder can be trained to shape $U$ and $v$ so that more queries of interest fall inside it. 
} Concretely, fine-tuning the bi-encoder against a loss tailored to the NNN decoder can potentially build gains beyond applying NNN decoding to frozen embeddings, which is discussed in Subsection~\ref{sec:method-unrolled}

\section{Elastic net decoding: inference and training}
\label{sec:elastic-net-decoding}
In this section, the details of NNN decoding, along with the training method for the corpus and query embeddings are explained. Here, we assume that each query embedding $v$ and corpus embeddings $u_i, \ i \in [N]$ are produced via encoders with trainable parameters $\theta$ and $\phi$, denoted as $E_\theta$ and $E_\phi$ : 
\[
v = E_\theta (q) , \qquad u_i = E_\phi(d_i),
\]
where $q$ and $\{d_i\}_{i \in [N]}$ are the text representations of the query and corpus respectively.  

\subsection{Inference with NNN decoding}
\label{sec:method-inference}

Given query embedding $v$ and precomputed corpus matrix $U$, NNN decoding solves the non-negative elastic net
\eqref{eq:enet-theory} and returns the support $\mathcal{S}(v) := \mathrm{supp}(w^\star)$ as the retrieved set, ranked internally by the nonzero coefficients. For $\lambda_2 > 0$, the objective is strongly convex and $w^\star$ is unique. Therefore, this can be solved using the FISTA \citep{beck2009fista} algorithm, whose non-negative variant iterates
\begin{equation}\label{eq:fista-update}
\begin{aligned}
      w^{(t+1)} &= \Bigl[\big(1 - \tfrac{\lambda_2}{L}\bigr) z^{(t)}
              + \tfrac{1}{L} U^\top(v - U z^{(t)}) - \tfrac{\lambda_1}{L}\Bigr]_+,\\
  z^{(t+1)} &= w^{(t+1)} + \beta^{(t)}\bigl(w^{(t+1)} - w^{(t)}\bigr),
\end{aligned}
\end{equation}
with precomputed Lipschitz constant $L = \|U^\top U\|_2 + \lambda_2$ and $\beta^{(t)} = (\tau^{(t)} - 1)/\tau^{(t+1)}$ for
$\tau^{(t+1)} = \big(1 + \sqrt{1 + 4(\tau^{(t)})^2}\big) / 2$. We take $w^\star \approx w^{(T)}$ for a fixed number of iterations $T$.

\paragraph{Inference time cost.} Each FISTA operation is dominated by the matrix-vector products $U z^{(t)}$ and 
$U^\top(\cdot)$ which totals to a computational complexity of $O(dNT)$ under $T$ iterations. 
Therefore, the cost remains linear in both the embedding dimension $d$ and corpus size $N$ for constant $T$, inheriting the same scalability as inner-product scoring.

\paragraph{Hyperparameter choice.} As discussed in Section~\ref{sec:theory-discussion}, Theorem~\ref{thm:inclusion} is a \emph{per-query} guarantee on the existence of \emph{some}
$(\lambda_1, \lambda_2)$ under which NNN decoding succeeds, and does not promise that a \emph{single} pair shared across queries captures this gap. Thus, the optimal deployment of NNN decoding would select $(\lambda_1, \lambda_2)$ per query, which is difficult in practice. We instead adopt the simpler method of fixing a single $(\lambda_1, \lambda_2)$ across all queries, tuned by grid search on a held-out validation set. 
Although this is different from the theoretical setting in Section~\ref{sec:theory}, we find empirically (Section~\ref{sec:experiments}) that this choice already yields consistent gains over naive inner-product scoring even with the same embeddings.

\subsection{Unrolled FISTA for end-to-end training}
\label{sec:method-unrolled}
Up to this point, NNN decoding is assumed to be applied to predefined query and corpus embeddings generated from a fixed encoder model. In this subsection, we describe how to fine-tune the encoder model such that it generates embeddings which are more suitable for NNN decoding. 

Every operation in \eqref{eq:fista-update} is differentiable almost everywhere. Truncating FISTA to $T$ iterations therefore yields a finite-depth differentiable map $\Psi_T : (U, v) \mapsto w^{(T)}$, whose Jacobian is obtained by ordinary backpropagation through the unrolled computational graph \citep{Gregor2010LISTA}. This is exploited to train the bi-encoder end-to-end against a loss defined on the solver's output, or equivalently the NNN decoder's retrieval performance. 
In particular, for a training dataset $\mathcal{D}$ with respect to a corpus $\{d_i\}_{i \in [N]}$, wherein each training sample consists of a pair of queries $q$ and its relevant set $S$, we propose to minimize the following loss function: 
\begin{equation}\label{eq:training-objective}
  \mathcal{L}\big(\theta, \phi  \ | \  \mathcal{D}, \{d_i\}_{i \in [N]} \big) \;=\; \sum_{(q, S) \in \mathcal{D}} \Bigl[\, \gamma\,\tau\log\ \sum_{j\in S^c}\!e^{w^{(T)}_j/\tau}
      \;+\; \tau\log \ \sum_{i\in S} e^{-w^{(T)}_i/\tau}\,\Bigr]_+, 
\end{equation}
which is a relaxation of $\bigl[\gamma\,\max_{j\in S^c} w_j - \min_{i\in S} w_i\bigr]_+$ via temperature parameter $\tau$ with margin factor $\gamma > 1$. The softmin over $S$ and the softmax over $S^c$ enforce \emph{completeness}, in which every relevant coefficient is above every irrelevant one by a multiplicative margin. 
Note that the dependency on $\theta, \phi$ for the right-hand side of the loss function occurs via $w^{(T)} = \Psi_T\big( [ E_\phi(d_i) ]_{i \in [N]} , E_\theta(q) \big).$ In practice, we freeze the original corpus embedding model and propose to train a 2-layer MLP on top of it, taking its parameters as $\phi$. 
The hyperparameters $(\lambda_1, \lambda_2)$, chosen in advance, parameterize the solver used on both the forward pass of training and inference, and are fixed throughout training. 

\section{Experiments}
\label{sec:experiments}

\subsection{Experimental Setup}
We evaluate NNN decoding from two axes. The first axis freezes the embeddings specifically fine-tuned for inner-product scoring and varies only the decoder, isolating gains from replacing inner-product scoring with NNN decoding. The second axis additionally fine-tunes the encoder through the Unrolled FISTA solver, capturing the gains from learning embeddings specifically suited to NNN decoding. Full reproducibility details are deferred to Appendix~\ref{sec:reproducibility}. 

\paragraph{Datasets.} We evaluate on five datasets benchmarking retrieval tasks ranging from tool selection to multi-hop reasoning. 
\textbf{ToolBank} \citep{moon2024toolbank}, which consist of the \textbf{NumpyBank, AWSBank}, and \textbf{PandasBank} datasets, benchmarks the performance of a retrieval system on selecting all necessary tools to complete a task query within the domain of API-programming. \textbf{ToolLens} \citep{Qu2024CompletenessOriented} is another tool retrieval benchmark for more ubiquitous tasks. In both cases, each document in the corpus corresponds to a detailed description of each tool. Finally, \textbf{MultiHop-RAG} \citep{tang2024multihoprag} is a dataset benchmarking the retrieval system's performance on selecting multiple complementary passages to answer a query. 
All datasets consist of a set of queries, each accompanied by a set of ground-truth documents.  

\paragraph{Evaluation.} For each benchmark, we report Recall@$k$ for $k = 3,5$, which credits the partial recovery of the relevant set $S$. 
We additionally report Completeness@$k$ (Comp@$k$), defined as the fraction of queries for which \emph{every} relevant document appears in the top-$k$. 
The NNN decoder is evaluated by ranking the corpus in the support of $w^{(T)}$ by coefficient magnitude. 

\paragraph{Baseline and Implementation Details.} All methods share a backbone bi-encoder \texttt{bge-small-en-v1.5} \footnote{\url{https://huggingface.co/BAAI/bge-small-en-v1.5}}; additional experiments using different backbones are given in Appendix~\ref{sec:additional_exp}. The baseline and NNN decoding methods are configured as the following: 
\begin{itemize}[leftmargin=0.5cm] 
    \item \textsc{\textbf{Dense}} is a dense retriever with the backbone fine-tuned using InfoNCE \citep{oord2018infonce, karpukhin2020dpr} on each dataset. 
    \item \textsc{\textbf{MMR}} (Maximal Marginal Relevance) \citep{Carbonell1998MMR} is a greedy re-ranking method for diverse retrieval which, at each iteration, selects a document $i$ via $ \argmax_{i \in [N] \backslash R} ( \lambda u_i^\top v - (1-\lambda) \max_{j \in R} u_i^\top u_j)$, where $R$ is the set of currently chosen documents, and $\lambda \in [0,1]$ is a hyperparameter controlling the diversity over the retrieved results. \textsc{Dense} is used for inner-product scoring. 
    \item \textsc{\textbf{COLT}} \citep{Qu2024CompletenessOriented} is a collaborative learning-based method to further fine-tune bi-encoders for selecting all complementary tools necessary to complete a task. As is done in \citep{Qu2024CompletenessOriented}, we use \textsc{dense} as the initial checkpoint when performing \textsc{COLT}. 
    Note that this only changes the training procedure, and still utilizes inner-product scoring for retrieval. 
    \item \textsc{\textbf{NNN-Fix}} applies the NNN decoder \eqref{eq:enet-theory} on the frozen embeddings of \textsc{dense}, with a single pair of $(\lambda_1, \lambda_2)\in [0.01, 0.03, 0.06, 0.1, 0.3, 0.6, 1.0]^2$ via validation on a held-out dataset. 
    As an ablation study, we also consider the case when only the $\ell_1$ or $\ell_2$ penalty term is in the decoder, referring to it as \textsc{\textbf{L1-Fix}} and \textsc{\textbf{L2-Fix}}, respectively. 
    \item \textsc{\textbf{NNN-Tr}} initializes from the \textsc{dense} checkpoint and continues training with the ranking loss \eqref{eq:training-objective} back-propagated through $T=50$ unrolled FISTA iterations, then retrieves with NNN decoding. The hyperparameters $(\lambda_1, \lambda_2)$, shared in both training and decoding, are selected likewise to \textsc{NNN-Fix}. During the decoding stage, we vary the FISTA iterations to evaluate its performance against inference cost. \textsc{\textbf{L1-Tr}} and \textsc{\textbf{L2-Tr}} variants are defined likewise. 
\end{itemize}
 Note that fine-tuning on \textsc{{dense}} was performed until early-stopping, to ensure that the dense encoder is fully trained, and therefore any performance gain observed on \textsc{NNN-Tr} and its variants are not a result of simply a higher number of epochs. 

\subsection{Experimental Results}

\begin{table*}[t]
\centering
\small
\setlength{\tabcolsep}{3.3pt}
\caption{Test-set Recall@3 and Recall@5 (\%) per dataset. Best per column is \textbf{bold}. Subscripts in {\color{gainpos}blue} and {\color{gainneg}red} denote performance increase and decrease from \textsc{Dense}, respectively. }
\begin{tabular*}{\textwidth}{@{\extracolsep{\fill}}lllllllllll@{}}
\toprule
Method & \multicolumn{2}{c}{NumpyBank} & \multicolumn{2}{c}{PandasBank} & \multicolumn{2}{c}{AWSBank} & \multicolumn{2}{c}{ToolLens} & \multicolumn{2}{c}{MultiHop-RAG} \\
\cmidrule(lr){2-3}\cmidrule(lr){4-5}\cmidrule(lr){6-7}\cmidrule(lr){8-9}\cmidrule(lr){10-11}
 & R@3 & R@5 & R@3 & R@5 & R@3 & R@5 & R@3 & R@5 & R@3 & R@5 \\
\midrule
\textsc{dense} & 66.9 & 79.5 & 40.1 & 49.7 & 63.4 & 75.3 & 81.2 & 92.5 & 77.4 & 88.4 \\
\textsc{MMR} & \cg{71.1}{+4.1} & \cg{80.6}{+1.1} & \cg{42.8}{+2.7} & \cg{51.2}{+1.4} & \cg{68.5}{+5.1} & \cg{77.6}{+2.3} & \cg{88.8}{+7.6} & \cg{97.1}{+4.6} &\cg{76.7}{-0.8} & \cg{88.3}{-0.1} \\
\textsc{COLT}
 & \cg{67.9}{+1.0} & \cg{80.2}{+0.7}
 & \cg{39.7}{-0.4} & \cg{49.0}{-0.7}
 & \cg{63.2}{-0.2} & \cg{75.1}{-0.2}
 & \cg{89.1}{+7.9} & \cg{97.0}{+4.5}
 & \cg{80.1}{+2.7} & \cg{89.1}{+0.7} \\
\midrule
\textsc{NNN-Fix}
 & \cg{73.6}{+6.7} & \cg{84.2}{+4.7}
 & \cg{46.8}{+6.7} & \cg{55.7}{+6.0}
 & \cg{72.0}{+8.6} & \cg{80.6}{+5.3}
 & \cg{88.0}{+6.8} & \cg{96.1}{+3.6}
 & \cg{77.3}{-0.1} & \cg{89.8}{+1.4} \\
\textsc{\hspace*{1.1em}L1-Fix}
 & \cg{73.5}{+6.6} & \cg{83.9}{+4.4}
 & \cg{46.9}{+6.8} & \cg{55.6}{+5.9}
 & \cg{71.9}{+8.5} & \cg{80.6}{+5.3}
 & \cg{88.5}{+7.3} & \cg{95.6}{+3.1}
 & \cg{72.7}{-4.7} & \cg{82.2}{-6.2} \\
\textsc{\hspace*{1.1em}L2-Fix}
 & \cg{71.9}{+5.0} & \cg{82.8}{+3.3}
 & \cg{40.4}{+0.3} & \cg{50.0}{+0.3}
 & \cg{65.3}{+1.9} & \cg{76.8}{+1.5}
 & \cg{87.5}{+6.3} & \cg{96.0}{+3.5}
 & \cg{76.4}{-1.0} & \cg{89.3}{+0.9} \\
\midrule
\textsc{NNN-Tr}
 & \cg{74.7}{+7.8}        & \cg{\textbf{85.2}}{+5.7}
 & \cg{\textbf{49.9}}{+9.8}& \cg{\textbf{58.1}}{+8.4}
 & \cg{73.4}{+10.0}       & \cg{\textbf{81.6}}{+6.3}
 & \cg{96.1}{+14.9}       & \cg{\textbf{98.4}}{+5.9}
 & \cg{84.4}{+7.0}        & \cg{\textbf{91.9}}{+3.5} \\
\textsc{\hspace*{0.85em}L1-Tr}
 & \cg{74.8}{+7.9}        & \cg{85.0}{+5.5}
 & \cg{48.8}{+8.7}        & \cg{57.4}{+7.7}
 & \cg{\textbf{73.5}}{+10.1}& \cg{\textbf{81.6}}{+6.3}
 & \cg{96.2}{+15.0}       & \cg{98.2}{+5.7}
 & \cg{\textbf{86.1}}{+8.7}& \cg{91.5}{+3.1} \\
\textsc{\hspace*{0.85em}L2-Tr}
 & \cg{\textbf{74.9}}{+8.0}& \cg{85.0}{+5.5}
 & \cg{48.5}{+8.4}        & \cg{57.6}{+7.9}
 & \cg{73.3}{+9.9}        & \cg{81.4}{+6.1}
 & \cg{\textbf{96.3}}{+15.1}& \cg{\textbf{98.4}}{+5.9}
 & \cg{82.3}{+4.9}        & \cg{89.9}{+1.5} \\
\bottomrule
\end{tabular*}

\vspace*{6pt}
\label{tab:recall-results}
\centering
\small
\setlength{\tabcolsep}{3.3pt}
\caption{Test-set Comp@3 and Comp@5 (\%) per dataset. Best per column is \textbf{bold}.  Subscripts in {\color{gainpos}blue} and {\color{gainneg}red} denote performance increase and decrease from \textsc{Dense}, respectively.  }
\begin{tabular*}{\textwidth}{@{\extracolsep{\fill}}lllllllllll@{}}
\toprule
Method & \multicolumn{2}{c}{NumpyBank} & \multicolumn{2}{c}{PandasBank} & \multicolumn{2}{c}{AWSBank} & \multicolumn{2}{c}{ToolLens} & \multicolumn{2}{c}{MultiHop-RAG} \\
\cmidrule(lr){2-3}\cmidrule(lr){4-5}\cmidrule(lr){6-7}\cmidrule(lr){8-9}\cmidrule(lr){10-11}
 & C@3 & C@5 & C@3 & C@5 & C@3 & C@5 & C@3 & C@5 & C@3 & C@5 \\
\midrule
\textsc{dense} & 30.8 & 50.9 & \hspace{0.25em}7.3 & 14.1 & 30.0 & 45.1 & 55.2 & 81.5 & 50.7 & 75.1 \\
\textsc{MMR}
 & \cg{38.6}{+7.8} & \cg{54.7}{+3.7}
 & \cg{9.9}{+2.5}  & \cg{15.8}{+1.7}
 & \cg{36.4}{+6.4} & \cg{49.7}{+4.6}
 & \cg{72.2}{+17.0}& \cg{93.1}{+11.6}
 & \cg{48.9}{-1.8} & \cg{74.7}{-0.4} \\
\textsc{COLT}
 & \cg{32.9}{+2.1} & \cg{52.5}{+1.6}
 & \cg{7.2}{-0.1}  & \cg{13.5}{-0.6}
 & \cg{29.9}{-0.1} & \cg{45.0}{-0.1}
 & \cg{73.5}{+18.3}& \cg{93.6}{+12.1}
 & \cg{54.2}{+3.5} & \cg{76.4}{+1.3} \\
\midrule
\textsc{NNN-Fix}
 & \cg{41.0}{+10.2}& \cg{61.4}{+10.5}
 & \cg{12.6}{+5.3} & \cg{20.3}{+6.2}
 & \cg{39.4}{+9.4} & \cg{55.3}{+10.2}
 & \cg{72.0}{+16.8}& \cg{91.4}{+9.9}
 & \cg{50.2}{-0.5} & \cg{76.4}{+1.3} \\
\textsc{\hspace{0.85em} L1-Fix}
 & \cg{41.3}{+10.5}& \cg{61.2}{+10.3}
 & \cg{12.4}{+5.1} & \cg{20.4}{+6.3}
 & \cg{39.4}{+9.4} & \cg{55.6}{+10.5}
 & \cg{73.4}{+18.2}& \cg{90.0}{+8.5}
 & \cg{40.0}{-10.7}& \cg{56.4}{-18.7} \\
\textsc{\hspace{0.85em} L2-Fix}
 & \cg{37.5}{+6.7} & \cg{57.9}{+7.0}
 & \cg{6.4}{-0.9}  & \cg{12.7}{-1.4}
 & \cg{30.1}{+0.1} & \cg{46.6}{+1.5}
 & \cg{70.5}{+15.3}& \cg{91.7}{+10.2}
 & \cg{45.3}{-5.4} & \cg{74.7}{-0.4} \\
\midrule
\textsc{NNN-Tr}
 & \cg{42.6}{+11.8}        & \cg{\textbf{63.1}}{+12.2}
 & \cg{\textbf{15.1}}{+7.8}& \cg{\textbf{23.1}}{+9.0}
 & \cg{41.3}{+11.3}        & \cg{\textbf{57.4}}{+12.3}
 & \cg{91.8}{+36.6}        & \cg{\textbf{97.0}}{+15.5}
 & \cg{63.1}{+12.4}        & \cg{\textbf{81.8}}{+6.7} \\
\textsc{\hspace{0.85em} L1-Tr}
 & \cg{\textbf{42.9}}{+12.1}& \cg{\textbf{63.1}}{+12.2}
 & \cg{13.6}{+6.3}          & \cg{22.4}{+8.3}
 & \cg{\textbf{41.4}}{+11.4}& \cg{\textbf{57.4}}{+12.3}
 & \cg{92.1}{+36.9}         & \cg{96.3}{+14.8}
 & \cg{\textbf{67.6}}{+16.9}& \cg{\textbf{81.8}}{+6.7} \\
\textsc{\hspace{0.85em} L2-Tr}
 & \cg{42.6}{+11.8}        & \cg{62.7}{+11.8}
 & \cg{13.4}{+6.1}         & \cg{22.3}{+8.2}
 & \cg{41.2}{+11.2}        & \cg{56.7}{+11.6}
 & \cg{\textbf{92.4}}{+37.2}& \cg{\textbf{97.0}}{+15.5}
 & \cg{59.6}{+8.9}         & \cg{80.0}{+4.9} \\
\bottomrule
\end{tabular*}
\vspace*{-12pt}
\label{tab:comp-results}
\end{table*}

Tables~\ref{tab:recall-results} and ~\ref{tab:comp-results} compare the performance of our NNN decoding variants with the baseline \textsc{dense} and \textsc{COLT}. 
Immediately, one can confirm that directly performing NNN decoding on the frozen embeddings (\textsc{NNN-Fix}) yield performance gains on the ToolBank datasets over \textsc{COLT} and \textsc{MMR}, and additionally on ToolLens when compared with \textsc{dense}, highlighting its effectiveness as a drop-in replacement over dense retrieval. Performing end-to-end training (\textsc{NNN-Tr}) yield further gains across all datasets and metrics. This is prominent in the Comp@$3$ metric, where a 36.6\% performance gain in the ToolLens dataset over \textsc{dense} and a 18.3\% performance gain over \textsc{COLT} is observed.

\begin{wrapfigure}[11]{r}{0.4\textwidth}
\centering
\vspace*{-12pt}
\includegraphics[width=0.4\textwidth]{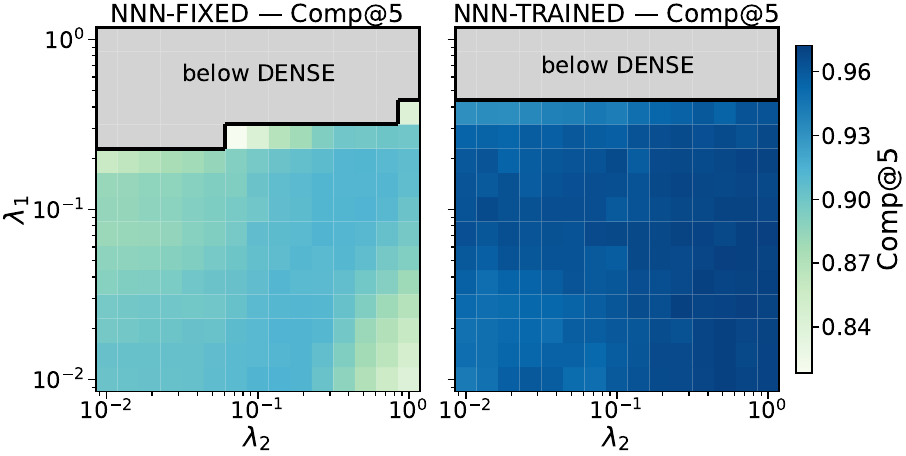}
\caption{ Comp@5 of \textsc{NNN-Fix} and \textsc{NNN-Tr} evaluated on a grid of $(\lambda_1, \lambda_2)$ for ToolLens. Compared to \textsc{NNN-Fix}, \textsc{NNN-Tr} is more robust. 
}
\label{fig:lambda_sweep}
\end{wrapfigure}
\paragraph{Hyperparameter sensitivity.} 
Tables~\ref{tab:recall-results} and~\ref{tab:comp-results} show that with frozen embeddings, squared $\ell_2$ regularization alone underperforms \textsc{NNN-Fix} substantially, while $\ell_1$ alone incurs a milder drop except on MultiHop-RAG. Once the embeddings are trained for NNN decoding, both variants are competitive. Consistent with this, Figure~\ref{fig:lambda_sweep} shows Comp@5 is more sensitive to $(\lambda_1, \lambda_2)$ under frozen embeddings. Thus, hyperparameter selection is crucial for frozen-embedding NNN decoding, but fine-tuning renders the decoder robust to it.

\paragraph{Performance against FISTA iterations.}
It is important to investigate how the number of FISTA iterations affect the performance of the NNN decoder in terms of inference time compute. Figure~\ref{fig:FISTA_iter} displays the change in performance of both \textsc{NNN-Fix} and \textsc{NNN-Tr} in terms of Comp@5. 
Here, we see that the behavior is dependent on the dataset we evaluate on, where \textsc{NNN-Fix} performance plateaus even at small iterations for MultiHop-RAG, while it continually increases for others. The same plateauing can also be observed for \textsc{NNN-Tr} for the ToolLens and MultiHop-RAG. 
Nevertheless, even small number of iterations suffice to improve over \textsc{dense}.

\paragraph{Performance against number of ground truth items.}
Figure~\ref{fig:GT_strat} plots Comp@5 of \textsc{dense}, \textsc{NNN-Fix}, and \textsc{NNN-Tr} conditioned on the number of ground truth items per query, $|S|$. When $|S|$ is small ($|S| \in \{1, 2\}$), the three methods are consistently similar, indicating that with few relevant documents, NNN decoding has little room to contribute. As $|S|$ grows, \textsc{dense} deteriorates sharply, while both NNN decoding variants degrade far more mildly. This is particularly pronounced in the ToolBank datasets.  
This phenomenon can be interpreted through the mechanism explained in Section~\ref{sec:theory-discussion} by the following. Recall that dense retrieval fails when an item in $S$ has high correlation with a item outside of $S$, but this is avoidable using NNN decoding by choosing the best (and possibly most relevant) item out of the two. Since the likelihood of $S$ including such a difficult-to-distinguish item increases with $|S|$, a wider gap with increasing $|S|$ is to be expected.

\begin{figure}
    \centering
    \includegraphics[width=1.0\linewidth]{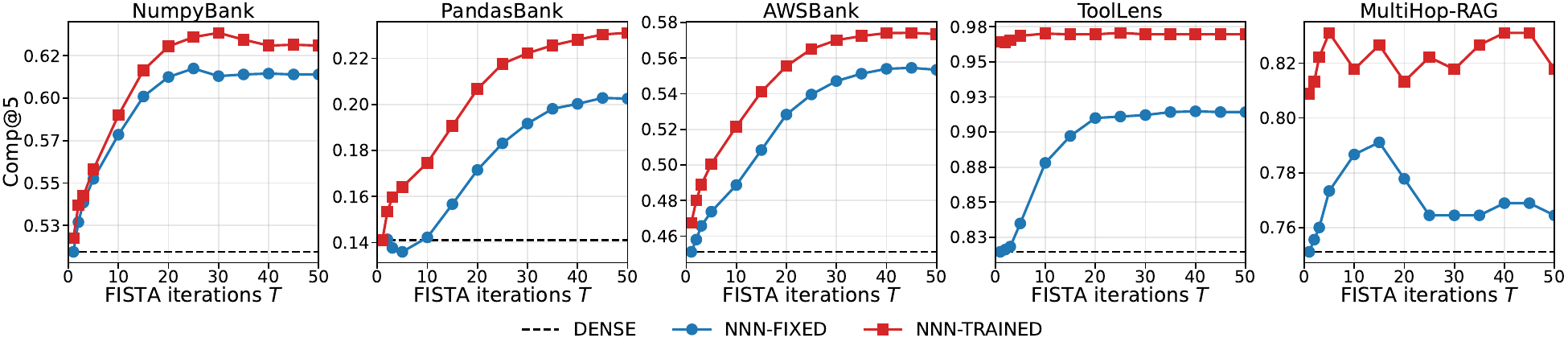}
    \caption{Comp@5 of \textsc{dense}, \textsc{NNN-Fix}, and \textsc{NNN-Tr} against FISTA iterations during inference.}
    \label{fig:FISTA_iter}
\vspace*{-6pt}
\end{figure}

\begin{figure}
    \centering
    \includegraphics[width=1.0\linewidth]{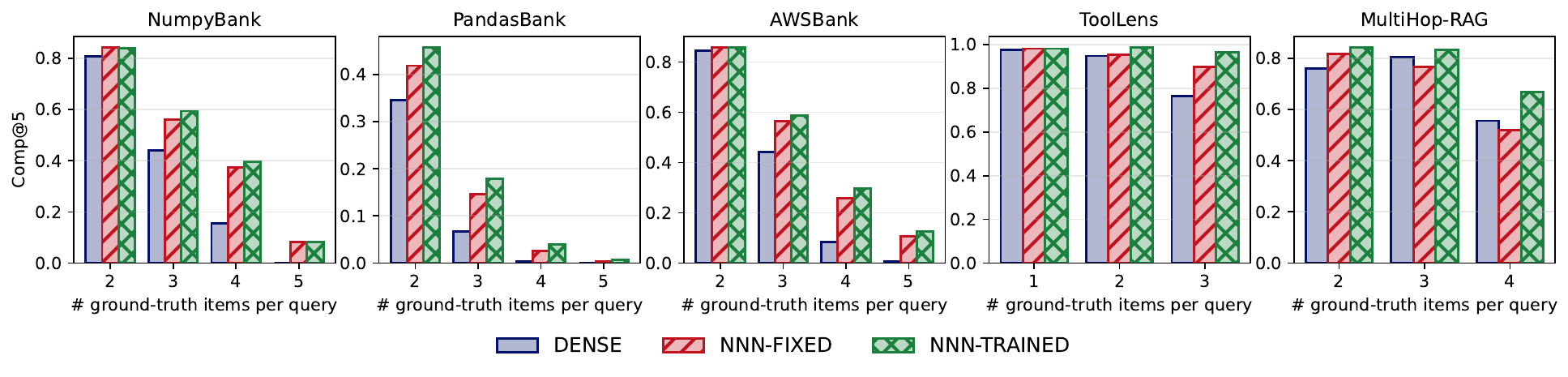}
    \caption{Comp@5 of \textsc{dense}, \textsc{NNN-Fix}, and \textsc{NNN-Tr} stratified over the number of ground truth items per query.}
    \label{fig:GT_strat}
\vspace*{-12pt}
\end{figure}
\section{Conclusion}
\label{sec:conclusion}
This paper presents Non-Negative elastic Net (NNN) decoding, a retrieval procedure that selects items by jointly reconstructing the query embedding as a sparse linear combination of their embeddings. The design is motivated by a structural limitation of independent inner product scoring which cannot resolve ambiguity among semantically similar documents, resulting in retrieving redundant information unfavorable for downstream tasks such as multi-hop inference and tool retrieval. 

Our central contribution is a comparison of these two retrieval procedures in both theory and practice. On the theoretical side, Theorem~\ref{thm:inclusion} shows that the queries correctly handled by inner-product scoring form a subset of those correctly handled by NNN decoding under appropriate regularization, and Proposition~\ref{prop:strict} exhibits corpora on which the containment is strict. With no assumptions on the encoder architecture, NNN decoding can be seen as a generalization of inner-product scoring which is robust against correlations among relevant and irrelevant items. 
On the empirical side, swapping inner-product scoring for NNN decoding at inference time, while keeping the bi-encoder left untouched, produces consistent gains on several benchmarks. This can be further improved by fine-tuning the encoder through an unrolled FISTA solver. 
We hope that our findings motivate a reassessment of inner-product scoring in dense retrieval, rather than considering it as a fixed component.

\paragraph{Limitations and future directions.} Our method does not fully realize the per-query nature of Theorem~\ref{thm:inclusion}; $(\lambda_1, \lambda_2)$ are selected globally on a held-out validation set, creating a mismatch between the theoretical guarantee and the practical deployment. 
 Developing a method selecting $(\lambda_1, \lambda_2)$ against each query from its embeddings would align deployment more closely with Theorem~\ref{thm:inclusion}, and likely result in further gains. 
In addition, alternative reconstruction methods beyond the non-negative elastic net merit study. In particular, solvers that avoid full passes over the corpus \citep{Fujiwara2016selectiveCD, Ida2023FastBCD, Ida2024FastIHTA} would be necessary for the inference cost to be comparable with fast dense retrieval using approximate nearest neighbor methods \citep{Malkov2020HNSW, guo2020scann, douze2024faiss}. 
On the training side, end-to-end fine-tuning through the unrolled solver incurs 
$O(dNT)$ memory per query, which limits \textsc{NNN-Tr} to moderate-scale corpora depending on available compute. A more efficient training procedure would be necessary to scale our method to larger corpora.

\bibliography{main}

\appendix

\section{Proofs}
\label{sec:proofs}
\textbf{Setup.} Let $U = [u_1,\ldots,u_N]\in\mathbb{R}^{d\times N}$ be a matrix of unit vectors and $S\subseteq[N]$ with $|S|=k\ge 1$. Define the correlation-gap region
\[
  \Phi_{\mathrm{DR}}(U, S) \;=\; \Bigl\{v \in \mathbb{S}^{d-1} : \max_{j \in S^c} u_j^\top v < \min_{i \in S} u_i^\top v \ \text{and} \;  \min_{i \in S} u_i^\top v > 0 \Bigr\}.
\]
and let $\Phi_{\mathrm{NNN}}(U,S)$ be the set of unit vectors $v$ for which there exist $\lambda_1,\lambda_2\ge 0$, not both zero, such that the unique minimizer 
\begin{equation}\label{eq:enet}
  w^\star(v,\lambda_1,\lambda_2)
  = \argmin_{w\ge 0} \Bigg(
    \frac{1}{2}\|Uw-v\|_2^2 + \lambda_1\mathbf{1}_N^\top w
    + \frac{\lambda_2}{2}\|w\|_2^2 \Bigg)
\end{equation}
has support exactly $S$. Note that $\|w\|_1 = \mathbf{1}_N^\top w$ since the minimization is conditioned on $w \ge 0$. 

\theoremone*

\begin{proof}
Fix $v$ with $\|v\|=1$, $v\in\Phi_{\mathrm{DR}}(U,S)$, and set
$\alpha:=\min_{i\in S}u_i^\top v>0$ and $\beta:=\max_{j\in S^c}u_j^\top v$.
Choose hyperparameters
\begin{equation}\label{eq:params}
  \lambda_1 = \frac{\alpha+\max(0,\beta)}{2},\qquad
  \delta =  \frac{\alpha - \max(0, \beta)}{2}, \qquad
  \lambda_2  = \frac{k}{\delta}+k-1.
\end{equation}

The KKT conditions for \eqref{eq:enet} require, for each coordinate $i$:
\begin{align}
  w_i^\star > 0 &\;\Longrightarrow\; u_i^\top v - u_i^\top Uw^\star - \lambda_1 = \lambda_2 w_i^\star, \label{eq:kkt-active}\\
  w_i^\star = 0 &\;\Longrightarrow\; u_i^\top v - u_i^\top Uw^\star \le \lambda_1. \label{eq:kkt-inactive}
\end{align}
Our proof strategy consists of constructing a candidate solution for our minimizer \eqref{eq:enet} which is only supported on $S$, and verifying that it satisfies the KKT conditions. 
Set $w_j^\star=0$ for $j\notin S$. Restricting \eqref{eq:kkt-active} to $S$ gives the linear system
\begin{equation}\label{eq:linsys}
  (U_S^\top U_S + \lambda_2 I)\,w_S^\star = U_S^\top v - \lambda_1\mathbf{1}_k,
\end{equation}
which has unique solution $w_S^\star=(U_S^\top U_S+\lambda_2 I)^{-1}(U_S^\top v-\lambda_1\mathbf{1}_k)$
since $U_S^\top U_S\succeq 0$ and $\lambda_2>0$.
It now remains to verify:
\begin{itemize}
  \item[(A)] $w_{S,i}^\star>0$ for all $i\in S$,
  \item[(B)] $u_j^\top v - u_j^\top U_S w_S^\star\le\lambda_1$ for all $j\in S^c$.
\end{itemize}

\smallskip\noindent\textit{Verification of (A).}
Write $U_S^\top U_S = I + G_S$. Since $G_S$ has zero diagonal and its elements have modulus no more than unity, $\|G_S\|_\infty\le k-1$ holds. Set
\[
  \rho := \frac{\|G_S\|_\infty}{1+\lambda_2} \le \frac{k-1}{1+\lambda_2} < 1,
\]
where the last inequality holds from our choice of $\lambda_2$.
Define the decoupled approximation
\begin{equation}\label{eq:wbar}
  \bar{w}_S := \frac{U_S^\top v - \lambda_1\mathbf{1}_k}{1+\lambda_2},
\end{equation}
so that \eqref{eq:linsys} reads $w_S^\star=(I+G_S/(1+\lambda_2))^{-1}\bar{w}_S$.
Since $\|A\|_\infty\le\rho<1$ for $A:=-G_S/(1+\lambda_2)$, the Neumann series and triangle inequality offer
\[
  \|w_S^\star-\bar{w}_S\|_\infty
  = \Bigl\|\sum_{n=1}^\infty A^n \bar{w}_S\Bigr\|_\infty
  \le \sum_{n = 1}^\infty \bigl\| A^n \bigl\|_\infty \|  \bar{w}_S \|_\infty
  \le \frac{\rho}{1-\rho}\|\bar{w}_S\|_\infty.
\]
Using the fact that 
\begin{equation*}
    \| \bar{w}_S \|_\infty = \frac{1}{1 + \lambda_2} \big( \max_{i \in S} u_i^\top v - \lambda_1 \big) \le \frac{1}{1+\lambda_2}, \quad \text{and} \quad
    \frac{\rho}{1 - \rho} \le \frac{k - 1}{\lambda_2 -k+2},
\end{equation*}
 we obtain
\begin{equation}\label{eq:err}
  \|w_S^\star-\bar{w}_S\|_\infty \;\le\; \frac{k-1}{(1+\lambda_2)(\lambda_2-k+2)}.
\end{equation}
Recall that by definition of $\alpha, \lambda_1$ and $\delta$, given in \eqref{eq:params},  
\begin{equation}\label{eq:wbar-lower}
  \bar{w}_{S,i} = \frac{u_i^\top v - \lambda_1}{1+\lambda_2} \ge \frac{1}{1 + \lambda_2} \big( \min_{i \in S} u_i^\top v - \lambda_1  \big) = \frac{\delta}{1 + \lambda_2}. 
\end{equation}
Combining \eqref{eq:err} and \eqref{eq:wbar-lower} offers 
\[
  w_{S,i}^\star \;\ge\; \bar{w}_{S,i} - \|w_S^\star-\bar{w}_S\|_\infty
  \ge \frac{1}{1+\lambda_2}\!\left(\delta - \frac{k-1}{\lambda_2-k+2}\right) \ge \frac{1}{1+\lambda_2}\!\left(\delta - \frac{k}{\lambda_2-k+2}\right) 
\]
for any $i \in S$. 
By \eqref{eq:params}, $\lambda_2-k+2=k/\delta+1$, so
\begin{equation}\label{eq:key}
  \frac{k}{\lambda_2-k+2} = \frac{k}{k+\delta} \ \delta < \delta,
\end{equation}
hence $w_{S,i}^\star>0$ for any $i \in S$, thus verifying condition (A).

\smallskip\noindent\textit{Verification of (B).}
From \eqref{eq:err}, $\|w_S^\star\|_\infty \le \|\bar{w}_S\|_\infty + \|w_S^\star-\bar{w}_S\|_\infty \le 1/(\lambda_2-k+2)$, so for all $j\in S^c$,
\begin{equation}\label{eq:inactive-bound}
  \bigl|u_j^\top U_S w_S^\star\bigr| \le k\|w_S^\star\|_\infty \le \frac{k}{\lambda_2-k+2} < \delta,
\end{equation}
where the last inequality follows from \eqref{eq:key}. 
Thus, 
\begin{equation}
     u_j^\top v - u_j^\top U_S w_S^\star \le \max_{j \in S^c} \Big( u_j^\top v  + \bigl|u_j^\top U_S w_S^\star\bigr| \Big) < \beta + \delta \le \lambda_1,
\end{equation}
which verifies condition (B).

The vector $w^\star$ satisfies conditions (A) and (B), hence the full KKT conditions \eqref{eq:enet}. Since the objective function is strongly convex from $\lambda_2>0$, this implies that $w^\star$ is the unique global minimizer. As $\mathrm{supp}(w^\star)=S$, we conclude $v\in\Phi_{\mathrm{NNN}}(U,S)$.
\end{proof}

\propositiontwo

\begin{proof}
Set $d=N=3$,
\[
  u_1=e_1,\quad u_2=\frac{1}{\sqrt{2}}(e_1+e_2),\quad u_3=e_3,\quad
  v=\frac{2}{3}e_1+\frac{2}{3}e_2+\frac{1}{3}e_3,\quad S=\{2,3\}.
\]
One computes
\begin{equation}\label{eq:ex-dots}
  u_1^\top v = \frac{2}{3}, \qquad
  u_2^\top v = \frac{2\sqrt{2}}{3}, \qquad
  u_3^\top v = \frac{1}{3},
\end{equation}
so $\min_{i\in S}u_i^\top v = \frac{1}{3} < \frac{2}{3} = \max_{j\in S}u_j^\top v$ and $v\notin\Phi_{\mathrm{DR}}(U,S)$.

We show $v\in\Phi_{\mathrm{NNN}}(U,S)$ with $\lambda_2=0$ and any $\lambda_1\in(0,\frac{1}{3})$. Since $U_S^\top U_S = I$, the system \eqref{eq:linsys} reduces to
\begin{equation}\label{eq:ex-sol}
  w_S^\star = U_S^\top v - \lambda_1\mathbf{1}
  = \begin{pmatrix}\frac{2\sqrt{2}}{3}-\lambda_1\\[4pt]\frac{1}{3}-\lambda_1\end{pmatrix} > 0,
\end{equation}
where positivity holds for $\lambda_1<\frac{1}{3}$. The inactive condition \eqref{eq:kkt-inactive} for $j=1$ gives
\begin{equation}\label{eq:ex-inactive}
  u_1^\top v - u_1^\top U_S w_S^\star
  = \frac{2}{3} - \frac{1}{\sqrt{2}}\!\left(\frac{2\sqrt{2}}{3}-\lambda_1\right)
  = \frac{\lambda_1}{\sqrt{2}} < \lambda_1,
\end{equation}
so the full KKT system is satisfied with $\mathrm{supp}(w^\star)=S$, giving $v\in\Phi_{\mathrm{NNN}}(U,S)$.
\end{proof}

\section{Reproducibility Information}
\label{sec:reproducibility}
\subsection{Datasets}
We evaluate on five benchmarks. The {NumpyBank},
{PandasBank}, and {AWSBank} datasets in ToolBank ship with their own
train / validation / test splits, which we use as released. {MultiHop-RAG}
and {ToolLens} do not provide validation splits, so they are constructed by the following: for {MultiHop-RAG}, we generate a random $80/10/10$
train / validation / test split, while for {ToolLens}, we retain the original
test split and randomly partition the original training set into
$80\%$ train and $20\%$ validation splits. The same splits are used throughout all experiments. The statistics of our dataset are summarized in Table \ref{tab:dataset-stats}. 

We note that the {ToolLens} repository also distributes a separate
``tuning'' set intended for hyperparameter selection. This is \emph{not} used,
since we verified that its query–tool pairs coincide with those of the official test set, and thus using it would amount to performing validation on the test data itself.

\begin{table}[b]
\centering
\caption{Dataset statistics. $|\mathcal{T}|$ is the corpus size. ``Tools/query''
reports the min / max / mean number of relevant items/documents per query on the
test split (statistics are nearly identical on train and validation). For
{NumpyBank}, {PandasBank}, and {AWSBank} we use the
released splits; for {MultiHop-RAG} we use a random $80/10/10$ split;
for {ToolLens} we keep the original test set and split the original
train set $80/20$ into train and validation.}
\label{tab:dataset-stats}
\begin{tabular}{lrrrrc}
\toprule
Dataset & $|\mathcal{T}|$ & $|Q_{\text{train}}|$ & $|Q_{\text{val}}|$
        & $|Q_{\text{test}}|$ & items/query \\
        & & & & & { (min / max / mean)} \\
\midrule
{NumpyBank}    &   511 & 15{,}994 & 1{,}998 & 2{,}000 & 2 / 6 / 2.88 \\
{PandasBank}   & 1{,}651 & 56{,}013 & 7{,}002 & 7{,}002 & 2 / 8 / 2.99 \\
{AWSBank}      & 1{,}002 & 58{,}227 & 7{,}278 & 7{,}278 & 2 / 6 / 3.04 \\
{ToolLens}     &   464 & 13{,}515 & 3{,}378 & 1{,}877 & 1 / 3 / 2.66 \\
{MultiHop-RAG} &   609 &  1{,}805 &    225 &    225 & 2 / 4 / 2.58 \\
\bottomrule
\end{tabular}
\end{table}

\subsection{\textsc{dense}}

\paragraph{Architecture.}
We fine-tune a dual bi-encoder consisting of separate query and corpus encoder models $E_\theta$ and $E_\phi$ respectively, both initialized from the same checkpoint and \emph{fully trainable} (no frozen parameters and no adapter).

\paragraph{InfoNCE Loss Function.}
Given a batch of $N$ (query, positive-corpus) pairs $\{(q_i, d_i)\}_{i\in [N]}$ and their embeddings $\{v_i = E_\theta(q_i), u_i = E_\phi(d_i)\}_{i \in [N]}$, we optimize the asymmetric in-batch InfoNCE loss \citep{karpukhin2020dpr}
\begin{equation}
    \mathcal{L}_{\text{InfoNCE}} = -\frac{1}{N} \sum_{i=1}^{N} \log \frac{\exp\!\big(\mathrm{sim}(v_i, u_i) / \tau\big)}{\sum_{j=1}^{N} \exp\!\big(\mathrm{sim}(v_i, u_j) / \tau\big)},
    \label{eq:infonce}
\end{equation}
where $\mathrm{sim}(\cdot,\cdot)$ is the cosine similarity. We use temperature $\tau = 0.1$. Each (query, ground-truth tool) pair from the training dataset is treated as one training example, so a query with multiple ground-truth tools contributes multiple pairs. 

\paragraph{Training and early stopping procedure}
We use AdamW with learning rate $2 \times 10^{-5}$. We train for $20$ epochs with batch size $64$. After each epoch we evaluate inner-product retrieval on the held-out validation split and track Comp@5. Only the best Comp@$5$ epoch is retained. It should be noted that throughout all datasets, the best Comp@$5$ epoch was always retained before reaching 20 epochs. 

\subsection{\textsc{MMR}}

Maximal Margin Relevance (MMR) \citep{Carbonell1998MMR} is a retrieval method that selects a set of documents which are relevant to a given query while remaining mutually diverse, thus achieving broader information coverage unattainable with only semantic similarity scoring, such as dense retrieval. 
This is done by an iterative selection process given in Algorithm~\ref{alg:mmr}, where the score of each document is penalized by its similarity with documents already selected during subsequent iterations. The penalization is controlled by the parameter $\lambda \in [0,1]$, where larger $\lambda$ induces stronger penalty and thus larger diversity. 
We tune $\lambda$ over $\{0.0, 0.1, 0.2, 0.3, \ldots, 0.9, 1.0\}$, selecting the value which has the highest Comp@$5$ on the held-out validation set.  

\begin{algorithm}[t]
\caption{Maximal Marginal Relevance (MMR) Re-ranking}
\label{alg:mmr}
\begin{algorithmic}[1]
\Require Query embedding ${v} \in \mathbb{R}^{d}$; corpus embeddings $\{u_i\}_{i=1}^{N} \subset \mathbb{R}^{d}$; trade-off parameter $\lambda \in [0,1]$; number of items to select $K$.
\State $R \gets \varnothing$.
\For{$k = 1, \ldots, K$}
    \If{$k = 1$}
        \State $d_k \gets \arg\max_{i \in [N]}\ v^\top u_i$
    \Else
        \State $d_k \gets \arg\max_{i \in R^c} \Big[\, \lambda \ v^\top u_i \;-\; (1-\lambda) \ \max_{j \in R}\ u_i^\top u_j \,\Big]$
    \EndIf
    \State $R \gets R \cup \{d_k\}$
\EndFor
\State \Return $R$
\end{algorithmic}
\end{algorithm}

\subsection{\textsc{COLT}}
COLT~\citep{Qu2024CompletenessOriented} is a completeness-oriented tool retriever that fine-tunes
a bi-encoder based on a graph-based collaborative-learning: tools, scenes
(query intents), and queries are arranged in a heterogeneous graph, and a GNN refines the embeddings to encourage co-occurring tools to be retrieved together. We use the authors' official implementation\footnote{\url{https://github.com/quchangle1/COLT}}
without modification. Following the released configuration, we train for $20$ epochs with batch size $256$, learning rate $10^{-3}$, $\ell_2$ regularization $10^{-7}$, $10$ negatives per positive, and contrastive temperature $0.1$.
We sweep the GNN depth over $\{1, 2, 3\}$ and the dual-loss balancing
coefficient ($\lambda$) over $\{0.01, 0.03, 0.06, 0.1, 0.3, 0.6, 1.0\}$, and select the best
configuration with respect to Comp@5 via validation. 
The collaborative-learning stage is initialized from the bi-encoder checkpoint in \textsc{dense}.

\subsection{\textsc{NNN-Tr}, \textsc{NNN-Fix} and their $\textsc{L1/L2}$ variants}
\paragraph{Architecture.}
Both encoders are initialized from the bi-encoder checkpoint produced by \textsc{dense}. The query encoder is fully fine-tuned, while the corpus encoder is frozen and equipped with a trainable two-layer MLP adapter on top of its output embeddings. The adapter has hidden dimension $768$, uses a GELU activation \citep{Hendrycks2016GeLU}, and combines the residual and adapted representations via a learnable scalar mixing coefficient $\alpha$ as $(1-\sigma(\alpha))\,x + \sigma(\alpha)\,\mathrm{MLP}(x)$, with $\alpha$ initialized to $-5$ so that training begins close to the frozen baseline. All embeddings are $\ell_2$-normalized before being passed to the NNN decoder. 

\paragraph{Architecture.}
Both encoders are initialized from the bi-encoder checkpoint produced by \textsc{dense}. The query encoder is fully fine-tuned, while the frozen corpus encoder is equipped with a trainable two-layer MLP adapter on top of its output embeddings. 
The adapter has hidden dimension $768$, uses a GELU activation, and combines the residual and adapted representations via a learnable scalar mixing coefficient $\alpha$ as $(1-\sigma(\alpha))\,x + \sigma(\alpha)\,\mathrm{MLP}(x)$, with $\alpha$ initialized to $-5$ so that training begins close to the frozen baseline. 
This adaptation was taken since backpropagation through $T$ unrolled FISTA iterations requires storing the intermediate iterates $\{w^{(t)}, z^{(t)}\}_{t=1}^{T}$ over the full corpus matrix and incurs $O(dNT)$ activation memory per query, and therefore fine-tuning the full corpus encoder can be highly intensive computationally. 
All embeddings are $\ell_2$-normalized before being passed to the sparse-coding head.

\paragraph{Hyperparameters.}
Throughout all experiments, the hyperparameters $\gamma$ and $\tau$ included in the loss function \eqref{eq:training-objective} are fixed to $\gamma = 1.5$ and $\tau = 0.1$. 
On the other hand, $\lambda_1$ and $\lambda_2$ are each swept over $\{0.01,\, 0.03,\, 0.06,\, 0.1,\, 0.3,\, 0.6,\, 1.0\}$.

\paragraph{Training and early stopping procedure.}
We use AdamW with learning rate $2 \times 10^{-5}$, weight decay $0.01$ as our optimizer. With batch size $64$, we train on 20 epochs for the NumpyBank, ToolLens, and MultiHop-RAG datasets, while we train on 5 epochs for PandasBank and AWSBank. 
After each epoch, we evaluate NNN decoding (FISTA with $T = 50$) on the held-out validation split and track Comp@$5$. Only the best Comp@$5$ epoch is retained. Training is halted early if Comp@5 fails to strictly improve for three consecutive epochs (patience $= 3$), or if the FISTA solver returns a degenerate all-zero solution.

\section{Additional Numerical Experiments}
\label{sec:additional_exp}
In this section, we report additional numerical experiments for \textsc{dense}, \textsc{COLT}, \textsc{NNN-Fix}, \textsc{NNN-Tr}, and their variants under different backbone encoders. Specifically, we evaluate two further bi-encoders: \texttt{MiniLM-L6-cos-v5}\footnote{\url{https://huggingface.co/sentence-transformers/msmarco-MiniLM-L6-cos-v5}} and \texttt{distilbert-base-tas-b}\footnote{\url{https://huggingface.co/sentence-transformers/msmarco-distilbert-base-tas-b}}. Recall and completeness metrics for \texttt{MiniLM-L6-cos-v5} are reported in Tables~\ref{tab:recall-results_L6-cos-v5} and~\ref{tab:comp-results_L6-cos-v5}, and the corresponding results for \texttt{distilbert-base-tas-b} are reported in Tables~\ref{tab:recall-results_tasb} and~\ref{tab:comp-results_tasb}. For both backbone models, we can see that \textsc{NNN-Fix} can perform better than \textsc{dense}, \textsc{MMR} and \textsc{COLT} for the ToolBank datasets, while \textsc{NNN-Tr} can outperform all baseline methods across every dataset and metric, highlighting the effectiveness of our method.

\begin{table*}
\centering
\small
\setlength{\tabcolsep}{3.3pt}
\caption{Test-set Recall3 and Recall@5 (\%) per dataset using the \texttt{MiniLM-L6-cos-v5} backbone. Best per column is \textbf{bold}.}
\begin{tabular*}{\textwidth}{@{\extracolsep{\fill}}lllllllllll@{}}
\toprule
Method & \multicolumn{2}{c}{NumpyBank} & \multicolumn{2}{c}{PandasBank} & \multicolumn{2}{c}{AWSBank} & \multicolumn{2}{c}{ToolLens} & \multicolumn{2}{c}{MultiHop-RAG} \\
\cmidrule(lr){2-3}\cmidrule(lr){4-5}\cmidrule(lr){6-7}\cmidrule(lr){8-9}\cmidrule(lr){10-11}
 & R@3 & R@5 & R@3 & R@5 & R@3 & R@5 & R@3 & R@5 & R@3 & R@5 \\
\midrule
\textsc{dense} & 63.7 & 76.6 & 37.7 & 46.1 & 59.2 & 71.8 & 80.0 & 90.9 & 80.5 & 90.8 \\
\textsc{MMR}
 & \cg{66.5}{+2.8} & \cg{77.9}{+1.3}
 & \cg{42.1}{+4.4} & \cg{49.3}{+3.2}
 & \cg{65.1}{+5.9} & \cg{74.8}{+3.0}
 & \cg{83.9}{+3.8} & \cg{93.8}{+2.9}
 & \cg{80.5}{+0.0} & \cg{90.8}{+0.0} \\
\textsc{COLT}
 & \cg{64.0}{+0.3} & \cg{76.1}{-0.5}
 & \cg{38.5}{+0.8} & \cg{47.6}{+1.5}
 & \cg{59.2}{+0.0} & \cg{71.8}{+0.0}
 & \cg{87.8}{+7.8} & \cg{96.2}{+5.3}
 & \cg{80.9}{+0.4} & \cg{91.0}{+0.2} \\
\midrule
\textsc{NNN-Fix}
 & \cg{71.2}{+7.5} & \cg{82.1}{+5.5}
 & \cg{44.6}{+6.9} & \cg{52.9}{+6.8}
 & \cg{69.4}{+10.2}& \cg{78.8}{+7.0}
 & \cg{83.8}{+3.8} & \cg{94.4}{+3.5}
 & \cg{79.4}{-1.1} & \cg{90.8}{+0.0} \\
\textsc{\hspace{1.1em}L1-Fix}
 & \cg{71.0}{+7.3} & \cg{81.6}{+5.0}
 & \cg{44.6}{+6.9} & \cg{52.9}{+6.8}
 & \cg{69.4}{+10.2}& \cg{78.8}{+7.0}
 & \cg{84.8}{+4.8} & \cg{92.6}{+1.7}
 & \cg{73.7}{-6.8} & \cg{80.2}{-10.6} \\
\textsc{\hspace{1.1em}L2-Fix}
 & \cg{69.2}{+5.5} & \cg{80.0}{+3.4}
 & \cg{39.1}{+1.4} & \cg{47.7}{+1.6}
 & \cg{60.9}{+1.7} & \cg{73.5}{+1.7}
 & \cg{82.6}{+2.6} & \cg{93.1}{+2.2}
 & \cg{76.0}{-4.5} & \cg{88.9}{-1.9} \\
\midrule
\textsc{NNN-Tr}
 & \cg{73.3}{+9.6}        & \cg{\textbf{84.3}}{+7.7}
 & \cg{\textbf{46.8}}{+9.1}& \cg{55.5}{+9.4}
 & \cg{\textbf{71.8}}{+12.6}& \cg{\textbf{80.5}}{+8.7}
 & \cg{\textbf{95.7}}{+15.7}& \cg{\textbf{98.4}}{+7.5}
 & \cg{84.4}{+3.9}        & \cg{\textbf{92.1}}{+1.3} \\
\textsc{\hspace{1.1em}L1-Tr}
 & \cg{\textbf{73.5}}{+9.8}& \cg{83.2}{+6.6}
 & \cg{\textbf{46.8}}{+9.1}& \cg{\textbf{55.7}}{+9.6}
 & \cg{71.7}{+12.5}        & \cg{80.3}{+8.5}
 & \cg{95.2}{+15.2}        & \cg{97.7}{+6.8}
 & \cg{\textbf{85.4}}{+4.9}& \cg{\textbf{92.1}}{+1.3} \\
\textsc{\hspace{1.1em}L2-Tr}
 & \cg{\textbf{73.5}}{+9.8}& \cg{83.4}{+6.8}
 & \cg{46.5}{+8.8}        & \cg{55.5}{+9.4}
 & \cg{71.6}{+12.4}        & \cg{80.3}{+8.5}
 & \cg{95.2}{+15.2}        & \cg{98.1}{+7.2}
 & \cg{80.7}{+0.2}        & \cg{89.9}{-0.9} \\
\bottomrule
\end{tabular*}
\label{tab:recall-results_L6-cos-v5}
\vspace*{12pt}

\centering
\small
\setlength{\tabcolsep}{3.3pt}
\caption{Test-set Comp@3 and Comp@5 (\%) per dataset using the \texttt{MiniLM-L6-cos-v5} backbone. Best per column is \textbf{bold}.}
\begin{tabular*}{\textwidth}{@{\extracolsep{\fill}}lllllllllll@{}}
\toprule
Method & \multicolumn{2}{c}{NumpyBank} & \multicolumn{2}{c}{PandasBank} & \multicolumn{2}{c}{AWSBank} & \multicolumn{2}{c}{ToolLens} & \multicolumn{2}{c}{MultiHop-RAG} \\
\cmidrule(lr){2-3}\cmidrule(lr){4-5}\cmidrule(lr){6-7}\cmidrule(lr){8-9}\cmidrule(lr){10-11}
 & C@3 & C@5 & C@3 & C@5 & C@3 & C@5 & C@3 & C@5 & C@3 & C@5 \\
\midrule
\textsc{dense} & 25.9 & 45.3 & 6.8 & 12.6 & 26.0 & 40.2 & 53.3 & 78.3 & 56.0 & 79.6 \\
\textsc{MMR}
 & \cg{30.7}{+4.8} & \cg{47.8}{+2.5}
 & \cg{10.5}{+3.7}  & \cg{15.0}{+2.4}
 & \cg{33.4}{+7.4} & \cg{46.1}{+5.9}
 & \cg{64.1}{+10.8}& \cg{86.1}{+7.8}
 & \cg{56.0}{+0.0} & \cg{79.6}{+0.0} \\
\textsc{COLT}
 & \cg{26.8}{+0.9} & \cg{44.9}{-0.4}
 & \cg{6.6}{+0.2}  & \cg{12.6}{+0.0}
 & \cg{26.0}{+0.0} & \cg{40.7}{+0.5}
 & \cg{68.3}{+15.0}& \cg{90.0}{+11.7}
 & \cg{57.8}{+1.8} & \cg{\textbf{82.7}}{+3.1} \\
\midrule
\textsc{NNN-Fix}
 & \cg{36.6}{+10.7}& \cg{57.2}{+11.9}
 & \cg{11.1}{+5.2} & \cg{17.8}{+5.2}
 & \cg{36.7}{+10.7}& \cg{52.7}{+12.5}
 & \cg{62.1}{+8.8} & \cg{87.0}{+8.7}
 & \cg{52.0}{-4.0} & \cg{80.0}{+0.4} \\
\textsc{\hspace{1.1em}L1-Fix}
 & \cg{36.5}{+10.6}& \cg{56.2}{+10.9}
 & \cg{11.1}{+5.2} & \cg{17.8}{+5.2}
 & \cg{36.7}{+10.7}& \cg{52.8}{+12.6}
 & \cg{65.2}{+11.9}& \cg{83.4}{+5.1}
 & \cg{40.9}{-15.1}& \cg{52.0}{-27.6} \\
\textsc{\hspace{1.1em}L2-Fix}
 & \cg{33.1}{+7.2} & \cg{51.5}{+6.2}
 & \cg{6.2}{+0.3}  & \cg{11.9}{-0.7}
 & \cg{25.2}{-0.8} & \cg{41.1}{+0.9}
 & \cg{58.9}{+5.6} & \cg{84.0}{+5.7}
 & \cg{44.4}{-11.6}& \cg{76.0}{-3.6} \\
\midrule
\textsc{NNN-Tr}
 & \cg{39.1}{+13.2}        & \cg{\textbf{60.9}}{+15.6}
 & \cg{\textbf{12.9}}{+6.0}& \cg{20.6}{+8.0}
 & \cg{\textbf{39.4}}{+13.4}& \cg{\textbf{55.5}}{+15.3}
 & \cg{\textbf{90.7}}{+37.4}& \cg{\textbf{96.7}}{+18.4}
 & \cg{61.3}{+5.3}        & \cg{\textbf{82.7}}{+3.1} \\
\textsc{\hspace{1.1em}L1-Tr}
 & \cg{\textbf{40.8}}{+14.9}& \cg{58.9}{+13.6}
 & \cg{12.9}{+6.0}        & \cg{\textbf{20.8}}{+8.2}
 & \cg{39.3}{+13.3}        & \cg{55.1}{+14.9}
 & \cg{89.9}{+36.6}        & \cg{95.3}{+17.0}
 & \cg{\textbf{63.6}}{+7.6}& \cg{\textbf{82.7}}{+3.1} \\
\textsc{\hspace{1.1em}L2-Tr}
 & \cg{40.1}{+14.2}        & \cg{59.5}{+14.2}
 & \cg{11.9}{+5.0}         & \cg{20.1}{+7.5}
 & \cg{39.0}{+13.0}        & \cg{55.0}{+14.8}
 & \cg{89.6}{+36.3}        & \cg{96.3}{+18.0}
 & \cg{55.6}{-0.4}         & \cg{79.1}{-0.5} \\
\bottomrule
\end{tabular*}
\label{tab:comp-results_L6-cos-v5}
\vspace*{12pt}
\end{table*}
\begin{table*}

\centering
\small
\setlength{\tabcolsep}{3.3pt}
\caption{Test-set Recall@3 and Recall@5 (\%) per dataset using the \texttt{distilbert-base-tas-b} backbone. Best per column is \textbf{bold}.}
\begin{tabular*}{\textwidth}{@{\extracolsep{\fill}}lllllllllll@{}}
\toprule
Method & \multicolumn{2}{c}{NumpyBank} & \multicolumn{2}{c}{PandasBank} & \multicolumn{2}{c}{AWSBank} & \multicolumn{2}{c}{ToolLens} & \multicolumn{2}{c}{MultiHop-RAG} \\
\cmidrule(lr){2-3}\cmidrule(lr){4-5}\cmidrule(lr){6-7}\cmidrule(lr){8-9}\cmidrule(lr){10-11}
 & R@3 & R@5 & R@3 & R@5 & R@3 & R@5 & R@3 & R@5 & R@3 & R@5 \\
\midrule
\textsc{dense} & 68.6 & 81.0 & 42.2 & 51.1 & 65.8 & 77.0 & 86.1 & 96.5 & 83.6 & 92.0 \\
\textsc{MMR}
  & \cg{72.7}{+4.0} & \cg{81.8}{+0.7}
 & \cg{44.5}{+2.3} & \cg{52.8}{+1.7}
 & \cg{70.0}{+4.2} & \cg{78.8}{+1.8}
 & \cg{90.0}{+3.9} & \cg{97.3}{+0.8}
 & \cg{83.6}{+0.0} & \cg{92.0}{+0.0} \\
\textsc{COLT}
 & \cg{68.7}{+0.1} & \cg{81.1}{+0.1}
 & \cg{42.1}{-0.1} & \cg{51.2}{+0.1}
 & \cg{65.8}{+0.0} & \cg{77.0}{+0.0}
 & \cg{87.8}{+1.7} & \cg{97.2}{+0.7}
 & \cg{83.7}{+0.1} & \cg{92.0}{+0.0} \\
\midrule
\textsc{NNN-Fix}
 & \cg{74.6}{+6.0} & \cg{85.2}{+4.2}
 & \cg{48.3}{+6.1} & \cg{56.8}{+5.7}
 & \cg{73.1}{+7.3} & \cg{81.6}{+4.6}
 & \cg{91.1}{+5.0} & \cg{97.6}{+1.1}
 & \cg{81.9}{-1.7} & \cg{92.7}{+0.7} \\
\textsc{\hspace{1.1em}L1-Fix}
 & \cg{74.6}{+6.0} & \cg{84.8}{+3.8}
 & \cg{48.2}{+6.0} & \cg{56.9}{+5.8}
 & \cg{73.0}{+7.2} & \cg{81.4}{+4.4}
 & \cg{92.6}{+6.5} & \cg{96.8}{+0.3}
 & \cg{77.8}{-5.8} & \cg{85.3}{-6.7} \\
\textsc{\hspace{1.1em}L2-Fix}
 & \cg{73.6}{+5.0} & \cg{84.0}{+3.0}
 & \cg{43.8}{+1.6} & \cg{53.5}{+2.4}
 & \cg{69.6}{+3.8} & \cg{79.5}{+2.5}
 & \cg{92.4}{+6.3} & \cg{97.7}{+1.2}
 & \cg{81.7}{-1.9} & \cg{92.4}{+0.4} \\
\midrule
\textsc{NNN-Tr}
 & \cg{\textbf{76.1}}{+7.5}& \cg{\textbf{85.7}}{+4.7}
 & \cg{\textbf{49.8}}{+7.6}& \cg{\textbf{58.2}}{+7.1}
 & \cg{\textbf{74.1}}{+8.3}& \cg{\textbf{82.5}}{+5.5}
 & \cg{\textbf{96.5}}{+10.4}& \cg{\textbf{98.4}}{+1.9}
 & \cg{86.1}{+2.5}        & \cg{\textbf{93.8}}{+1.8} \\
\textsc{\hspace{1.1em}L1-Tr}
 & \cg{75.5}{+6.9}        & \cg{84.9}{+3.9}
 & \cg{49.7}{+7.5}        & \cg{58.0}{+6.9}
 & \cg{\textbf{74.1}}{+8.3}& \cg{82.3}{+5.3}
 & \cg{96.3}{+10.2}       & \cg{98.3}{+1.8}
 & \cg{\textbf{87.0}}{+3.4}& \cg{92.7}{+0.7} \\
\textsc{\hspace{1.1em}L2-Tr}
 & \cg{75.1}{+6.5}        & \cg{85.1}{+4.1}
 & \cg{47.8}{+5.6}        & \cg{56.9}{+5.8}
 & \cg{73.0}{+7.2}        & \cg{81.8}{+4.8}
 & \cg{96.3}{+10.2}       & \cg{\textbf{98.4}}{+1.9}
 & \cg{84.5}{+0.9}        & \cg{92.0}{+0.0} \\
\bottomrule
\end{tabular*}
\label{tab:recall-results_tasb}
\vspace*{12pt}
\centering
\small
\setlength{\tabcolsep}{3.3pt}
\caption{Test-set Comp@3 and Comp@5 (\%) per dataset using the \texttt{distilbert-base-tas-b} backbone. Best per column is \textbf{bold}.}
\begin{tabular*}{\textwidth}{@{\extracolsep{\fill}}lllllllllll@{}}
\toprule
Method & \multicolumn{2}{c}{NumpyBank} & \multicolumn{2}{c}{PandasBank} & \multicolumn{2}{c}{AWSBank} & \multicolumn{2}{c}{ToolLens} & \multicolumn{2}{c}{MultiHop-RAG} \\
\cmidrule(lr){2-3}\cmidrule(lr){4-5}\cmidrule(lr){6-7}\cmidrule(lr){8-9}\cmidrule(lr){10-11}
 & C@3 & C@5 & C@3 & C@5 & C@3 & C@5 & C@3 & C@5 & C@3 & C@5 \\
\midrule
\textsc{dense} & 33.0 & 53.4 & 9.0 & 15.7 & 32.5 & 48.1 & 65.3 & 92.3 & 60.0 & 82.2 \\
\textsc{MMR}
 & \cg{40.6}{+7.6} & \cg{56.5}{+3.0}
 & \cg{11.2}{+2.2}  & \cg{17.8}{+2.1}
 & \cg{38.0}{+5.5} & \cg{52.0}{+3.9}
 & \cg{76.4}{+11.1} & \cg{94.7}{+2.4}
 & \cg{60.0}{+0.0} & \cg{82.2}{+0.0} \\
\textsc{COLT}
 & \cg{33.5}{+0.5} & \cg{53.8}{+0.4}
 & \cg{8.8}{-0.2}  & \cg{15.7}{+0.0}
 & \cg{32.5}{+0.0} & \cg{48.1}{+0.0}
 & \cg{67.6}{+2.3} & \cg{92.7}{+0.4}
 & \cg{60.0}{+0.0} & \cg{82.7}{+0.5} \\
\midrule
\textsc{NNN-Fix}
 & \cg{42.4}{+9.4} & \cg{63.2}{+9.8}
 & \cg{13.8}{+4.8} & \cg{21.8}{+6.1}
 & \cg{41.0}{+8.5} & \cg{57.3}{+9.2}
 & \cg{79.2}{+13.9}& \cg{95.4}{+3.1}
 & \cg{57.8}{-2.2} & \cg{83.6}{+1.4} \\
\textsc{\hspace{1.1em}L1-Fix}
 & \cg{42.5}{+9.5} & \cg{62.8}{+9.4}
 & \cg{13.8}{+4.8} & \cg{21.9}{+6.2}
 & \cg{41.0}{+8.5} & \cg{57.1}{+9.0}
 & \cg{83.5}{+18.2}& \cg{93.2}{+0.9}
 & \cg{47.6}{-12.4}& \cg{64.4}{-17.8} \\
\textsc{\hspace{1.1em}L2-Fix}
 & \cg{40.2}{+7.2} & \cg{60.2}{+6.8}
 & \cg{8.6}{-0.4}  & \cg{16.2}{+0.5}
 & \cg{35.4}{+2.9} & \cg{51.9}{+3.8}
 & \cg{82.8}{+17.5}& \cg{95.3}{+3.0}
 & \cg{57.3}{-2.7} & \cg{83.1}{+0.9} \\
\midrule
\textsc{NNN-Tr}
 & \cg{\textbf{44.6}}{+11.6}& \cg{\textbf{64.0}}{+10.6}
 & \cg{14.9}{+5.9}         & \cg{23.1}{+7.4}
 & \cg{42.0}{+9.5}         & \cg{\textbf{59.0}}{+10.9}
 & \cg{\textbf{93.0}}{+27.7}& \cg{\textbf{97.1}}{+4.8}
 & \cg{66.7}{+6.7}         & \cg{\textbf{85.8}}{+3.6} \\
\textsc{\hspace{1.1em}L1-Tr}
 & \cg{44.1}{+11.1}        & \cg{62.5}{+9.1}
 & \cg{\textbf{15.3}}{+6.3}& \cg{\textbf{23.6}}{+7.9}
 & \cg{\textbf{42.2}}{+9.7}& \cg{58.7}{+10.6}
 & \cg{92.4}{+27.1}        & \cg{96.7}{+4.4}
 & \cg{\textbf{67.6}}{+7.6}& \cg{84.0}{+1.8} \\
\textsc{\hspace{1.1em}L2-Tr}
 & \cg{43.2}{+10.2}        & \cg{62.5}{+9.1}
 & \cg{12.5}{+3.5}         & \cg{20.6}{+4.9}
 & \cg{40.6}{+8.1}         & \cg{56.9}{+8.8}
 & \cg{92.3}{+27.0}        & \cg{\textbf{97.1}}{+4.8}
 & \cg{64.9}{+4.9}         & \cg{83.1}{+0.9} \\
\bottomrule
\end{tabular*}
\label{tab:comp-results_tasb}
\end{table*}

\end{document}